\newtheorem{theorem}{Theorem}
\newtheorem{lemma}{Lemma}
\begin{document}
\begin{opening}

\title{Arbitrary pattern formation by asynchronous opaque robots on infinite grid}
\author[Gayeshpur Government Polytechnic, Department of Science and Humanities, Kalyani, West Bengal - 741234, India, ORCID iD: https://orcid.org/0000-0003-4179-8293, email: manashkrkundu.rs@jadavpuruniversity.in]{Manash Kumar Kundu}
\author[Jadavpur University, Department of Mathematics, Kolkata , West Bengal - 700032, India, ORCID iD: https://orcid.org/0000-0002-0546-3894, email: pritamgoswami.math.rs@jadavpuruniversity.in]{Pritam Goswami}
\author[Jadavpur University, Department of Mathematics, Kolkata , West Bengal - 700032, India, ORCID iD: https://orcid.org/0000-0003-1747-4037, email: satakshighosh.math.rs@jadavpuruniversity.in]{Satakshi Ghosh}
\author[Jadavpur University, Department of Mathematics, Kolkata , West Bengal - 700032, India, email: buddhadeb.sau@jadavpuruniversity.in]{Buddhadeb Sau}

%
%
\begin{abstract}
Arbitrary pattern formation ($\mathcal{APF}$) by mobile robots is studied by many in literature under different conditions and environment. Recently it has been studied on an infinite grid network but with full visibility. In opaque robot model, circle formation on infinite grid has also been studied. In this paper, we are solving $\mathcal{APF}$ on infinite grid with asynchronous opaque robots with lights. The robots do not share any global co-ordinate system. The main challenge in this problem is to elect a leader to agree upon a global co-ordinate where the vision of the robots are obstructed by other robots. Since the robots are on a grid, their movements are also restricted to avoid collisions. In this paper, the aforementioned hardness are overcome to produce an algorithm that solves the problem. 

\end{abstract}

\keywords{Distributed Algorithm, Arbitrary Pattern Formation, Compact Line Formation, Opaque Robots, Autonomous Robots, Luminous Robots, Asynchronous, Look-Compute-Move Cycle, Infinite Grid.}

\end{opening}

\section{Introduction}

Arbitrary Pattern Formation ($\mathcal{APF}$) Problem is a classical problem in swarm robotics which deals with fundamental coordination problem of multi-robot systems. The problem is to design an algorithm which will be used by each autonomous mobile robot of a robot swarm that will guide the robots to form any specific pattern that is given to the robots initially as input. In this problem, the robots are modeled as \textit{autonomous} (there is no central control), \textit{anonymous} (the robots do not have any unique identifiers), \textit{homogeneous} (all robots execute the same algorithm) and \textit{identical} (robots are indistinguishable by their appearance). All the robots can freely move on the plane. Each robot has sensing capability by which they can perceive the location of other robots on the plane. The robots do not have any global coordinate system (each robot has its own local coordinate system) and they operate in \textsc{Look-Compute-Move} (\textit{LCM}) cycles. In the \textsc{Look} phase, a robot takes a snapshot of its surroundings. In the \textsc{Compute}, phase a robot process the information got from the \textsc{Look} phase and in the \textsc{Move}, phase a robot moves to another position (a robot also might stay still in this phase) depending on the output of the \textsc{Compute} phase. 
\subsection{Earlier Works}
The problem of Arbitrary Pattern Formation was first introduced in \cite{Suzuki96distributedanonymous}
and after that this problem has been studied many times in the literature
(\cite{YAMASHITA20102433,inproceedings,VAIDYANATHAN2021104699,BramasT16,BramasT18,CiceroneSN19,CiceroneSN19a,0001FSY15,DieudonnePV10,FelettiMP18,FlocchiniPSW08,FujinagaYOKY15,LukovszkiH14,BoseKAS21,abs-1910-02706}).
Initially the problem has been studied only assuming that the robots do not have obstructed visibility. But in a more practical setting, when more than two robots are in a straight line, a robot with camera sensors can only see its adjacent robots (at most two robots). These robots are known as opaque robots. In \cite{BoseKAS21}, $\mathcal{APF}$ has been solved considering opaque robots with visible lights which can assume  6 persistent colors. They have also assumed one axis agreement for each robots. This model of luminous robots has first been introduced in \cite{Peleg05} by Peleg et al. The visible lights can be used by robots as a means of communication and persistent memory. In \cite{BoseKAS21}, the robots are considered to be point robots. But in real life scenario, robots are physical entities that have certain dimensions. So in \cite{abs-1910-02706}, the problem of $\mathcal{APF}$ has been considered and solved using fat robots with luminous robots having 10 colors.

In a plane, the robots can move freely in any direction. So collision can be avoided by the robots by comparably easy techniques. So solving $\mathcal{APF}$ in specific network (eg. grid) is quite interesting itself where it is not so easy to avoid collisions. In \cite{BoseAKS20}, the authors have considered this problem and produced an algorithm with robots having full visibility on an infinite grid in $\mathcal{OBLOT}$ model. Furthermore, in obstructed visibility model, a problem of circle formation on an infinite grid has been solved in \cite{AdhikaryKS21}  with opaque luminous robots with 7 colors. In this paper, we are considering the problem of arbitrary pattern formation on an infinite grid with luminous robots with obstructed visibility. 

\subsection{Problem description and our contribution}
This paper deals with the arbitrary pattern formation problem on an infinite grid using opaque luminous robots with 8 colors. The robots operate in \textit{LCM} cycles under an adversarial asynchronous scheduler.
The robots are autonomous, anonymous, identical and homogeneous. They move only through the edges of the grids and the movement is instantaneous for each robot (i.e a robot can only be seen on a grid point). Initially the robots are placed arbitrarily on the grid. From this configuration, they need to move to a target configuration or, Pattern (a set of target coordinates) without collision. The robots have one axis agreement and does not have agreement on global coordinate (each robot has its own local coordinate). 

The main difficulty of the problem is visibility. As $\mathcal{APF}$ is closely related to the \textsc{Leader Election} problem, without seeing the whole configuration it is quite hard to elect a leader and thus design an algorithm to solve the problem. Depending on the local view of each robot, the algorithm is need to be designed. In this paper, the described algorithm does so. Also another difficulty was to avoid collision between robots while they are moving. We removed this difficulty by using a technique where the robots will move sequentially.

The problem described in this paper is also very practical in nature. The restricted movement and also the obstructed visibility, these practical scenarios are considered here. The algorithm described in this paper solves the above mentioned $\mathcal{APF}$ problem in  total $\mathcal{O}(kD)$ moves in the worst case, where $k$ is the number of robots on the grid and $D$ is $max\{m,n,M,N,k\}$ ($m$, $n$ are the height and width of smallest enclosing rectangle of the initial configuration; $M$, $N$ are the same for the target configuration).

\section{Model and Definitions}\label{model}

\subsection{Model}

~~~~~\textbf{Robots:} Robots are autonomous, anonymous, homogeneous and identical. They are deployed on a two-dimensional infinite grid where each of them is initially positioned on distinct grid points. They do not have a common notion of direction. The robots have an agreement over the positive direction of X-axis i.e, all the robots have an agreement over left and right. They do not have any agreement over the Y-axis. Here the robots do not have access to any global coordinate system other than the agreement over the positive direction of X-axis. The total number of robots is not known to them. The robots are assumed to be dimensionless and modeled as points.




\textbf{Look-Compute-Move cycles:} The robot, when active, operates according to the \textsc{Look-Compute-Move} cycle. In the \textsc{Look} phase, a robot takes the snapshot of the positions of all the robots represented in its own local co-ordinate system. Then the robot performs computation and compute the next position and a light according to a deterministic algorithm i.e., the \textsc{Compute} phase. In the \textsc{Move} phase, it will either move unit length to the desired location along a straight line or make a null move.

\textbf{Scheduler:} We assume that the robots are controlled by an asynchronous adversarial scheduler. This implies that the amount of time spent in \textsc{Look}, \textsc{Compute}, \textsc{Move}, and inactive states by different robots is finite but unbounded and unpredictable. As a result, the robots do not have a common notion of time, and the configuration perceived by a robot during the \textsc{Look} phase may significantly change before it actually makes a move.

\textbf{Movement:} The movement of robots are restricted only along grid lines from one grid point to one of its four neighboring grid points. Robots' movements are assumed to be instantaneous in discrete domains. Here we assume that the movements are instantaneous. The robots are always seen on grid points, not on edges.

\textbf{Visibility:} The robots visibility is unlimited but by the presence of other robots it can be obstructed. A robot $r_i$ can see another robot $r_j$ if and only if there are no robots on the straight line segment $\overline{r_i r_j}$.

\textbf{Lights:} Each robot is equipped with an externally visible light, which can assume a $\mathcal{O}(1)$ number of predefined lights. The robots communicate with each other using these lights. The lights are not deleted at the end of a cycle, but otherwise, the robots are oblivious. The lights used in our algorithm are $\{\texttt{off}$, $\texttt{terminal1}$, $\texttt{symmetric}$, $\texttt{decider}$, $\texttt{call}$, $\texttt{leader1}$, $\texttt{leader}$, $\texttt{done} \}$.
\subsection{Notations and Definitions}
We have used some notations throughout the paper. A list of these notations along with their definitions are mentioned in the following table.

\vspace{0.01\linewidth}
\begin{center}
\begin{tabular}{ | m{4em} | m{10cm}| } 
\hline
$\mathcal{L}_1$& First vertical line on left that contains at least one robot.\\
  \hline
     $\mathcal{L}_V(r)$ & The vertical line on which the robot $r$ is located.\\
    \hline
    $\mathcal{L}_H(r)$ & The horizontal line on which the robot $r$ is located.\\
    \hline
     $\mathcal{L}_I(r)$ & The left immediate vertical line of robot $r$ which has at least one robot on it. \\
   \hline
     $\mathcal{R}_I(r)$&  The right immediate vertical line of robot $r$ which has at least one robot on it. \\
    \hline
    ${H}_L^O(r)$ &  Left open half for the robot $r$. \\
    \hline
     ${H}_L^C(r)$ & Left closed half for the robot $r$ (i.e  ${H}_L^O(r) \cup \mathcal{L}_V(r)$).\\
     \hline
     ${H}_B^O(r)$ &  Bottom open half for the robot $r$.\\
     \hline
     ${H}_B^C(r)$ &  Bottom closed half for the robot $r$ (i.e  ${H}_B^O(r) \cup \mathcal{L}_H(r)$).\\
     \hline
     ${H}_U^O(r)$ & Upper open half for the robot $r$.\\
    \hline
    ${H}_U^C(r)$ & Upper closed half for the robot $r$ (i.e  ${H}_U^O(r) \cup \mathcal{L}_H(r)$). \\
    \hline
    $L_{t_j-1}$ &   The horizontal line below the target position $t_j$. \\
    \hline
    $K$ & The horizontal line passing through the middle point of the line segment between two robots with light \texttt{decider} or \texttt{terminal1} on the same vertical line. \\
    \hline
    $L_{H1}$ & The immediate horizontal line above the robot with light \texttt{leader}. \\
      \hline
\end{tabular}
\end{center}
\vspace{0.01\linewidth}
Some additional definitions are needed to be explained which will be useful later.

\textit{\textbf{Configuration:}} Let us consider a team of robots placed on an simple undirected connected graph $G = (V,E)$. Let us define a function $f:V \rightarrow \{0\} \cup \mathcal{N}$, where $f(v)$ is the number of robots placed on vertex $v$. The graph $G$ together with the function $f$ is called a configuration which is denoted by $ \mathbb{C}=(G,f)$. For ant time $T$, $\mathbb{C}(T)$ will denote the configuration of the robots at time $T$.

For a graph $G =(V, E)$, $\phi:V \rightarrow V$ is an automorphism if $\phi$ is a bijection and $\phi(u) \phi(v)$ is adjacent iff $u$ and $v$ are adjacent $\forall u,v \in V$. All the automorphisms of $G$ form a group denoted by $Aut(G)$. Similarly we can define an automorphism $\phi$ for a configuration $(G,f)$ where $\phi \in Aut(G)$ and $f(u) = f(\phi(u)), \forall u \in V$. All automorphisms on $(G,f)$ form a group denoted by $Aut(G,f)$.

\textit{\textbf{Symmetric configuration:}} For any configuration $ \mathbb{C}=(G,f)$, we can define the group $Aut(\mathbb{C})$. $\phi(v) = v, \forall v \in V$ is called a trivial symmetry. Every  non trivial $\phi \in Aut(\mathbb{C})$ is called a symmetry of $\mathbb{C}$.  Note that all symmetric configurations  of a configuration $\mathbb{C}$ is basically generated by some translations, rotations and reflections. Translation shifts all the vertices by the same amount. Since the number of robots in the configuration $\mathbb{C}$ is finite it is easy to see that there is no translation in $Aut(\mathbb{C})$. Reflections are defined by some axis or line of reflection. It can be vertical, horizontal or diagonal. The angle of rotation can be 90$^\circ$ or 180$^\circ$. The center of rotation can be a vertex of the grid, center of an unit square or a center of an edge.

\textit{\textbf{Stable Configuration:}} A configuration $\mathbb{C}$ is called a stable configuration if the following conditions are satisfied in $\mathbb{C}$.
\begin{center}
    \begin{enumerate}
        \item There are two robots with light \texttt{decider} on same vertical line and all other robots in $\mathbb{C}$ have light \texttt{off}.
        \item The vertical and horizontal line on which the robots with light \texttt{decider} are located don't have any other robots.
        \item The robots with light \texttt{decider} have no robots on left open half  and also their upper closed half or bottom closed half have no other robots.
    \end{enumerate}
\end{center}

\textit{\textbf{Leader Configuration:}}  A configuration $\mathbb{C}$ is called a leader configuration if the following conditions are satisfied in $\mathbb{C}$.
\begin{center}
    \begin{enumerate}
        \item There are exactly one robot with light \texttt{leader} and all other robots have light \texttt{off}.
        \item The vertical line and the horizontal line on which the robot with light \texttt{leader} is located do not have other robots.
        \item The robots with light \texttt{leader} has no robots on  left open half and also upper open half or bottom open half is empty .
    \end{enumerate}
\end{center}

\textit{\textbf{Compact Line:}} A line is called compact if there is no unoccupied grid position between any two robots on that line.

\textit{\textbf{Terminal Robot:}} A robot $r$ is called a terminal robot if $\mathcal{L}_V(r) \cap H$ is empty, where $H \in \{H_B^O(r), H_U^O(r)\}$.

 \textit{\textbf{Symmetry of a vertical line $L$ w.r.t $K$:}} Let $L$ be a vertical line of the grid and $\lambda$ be a binary sequence defined on $L$ such that $j$-th term of $\lambda$ is defined as follows:
 \begin{equation*}
 \lambda(j) = \begin{cases}
       1 & \text{if $\exists$ a robot on the $j$-th grid point from $K \cap L$ on the line $L$.} \\
       0 & \text{otherwise.}
     \end{cases}
\end{equation*}
By definition of $\lambda$, it follows that there are two such values of $\lambda$, say $\lambda_1$ and $\lambda_2$. We say that the line $L$ is symmetric with respect to $K$ if $\lambda_1 = \lambda_2$. For future, whenever symmetry of a line is mentioned, it is assumed that it means the symmetry of the line with respect to $K$.


 \textit{\textbf{Dominant half:}} A robot $r$ is said to be on the dominant half if the following conditions are satisfied:
 \begin{enumerate}
     \item $\mathcal{R}_I(r)$ is not symmetric with respect to $K$.
     \item If  lexicographically $\lambda_1 > \lambda_2$ on $\mathcal{R}_I(r)$, then $r$ and the portion of $\mathcal{R}_I(r)$ corresponding to $\lambda_1$ lie on same half plane delimited by $K$.
 \end{enumerate}
\section{The Algorithm}
The main result of the paper is Theorem 1. The proof of the ‘only if’ part is the same as in case for point
robots, proved in \cite{BoseKAS21}. The ‘if’ part will follow from the algorithm presented in this section.

\begin{theorem}
\label{thm1.1}
For a set of opaque luminous robots having one axis agreement, $\mathcal{APF}$ is deterministically solvable if and only if the initial configuration is not symmetric with respect to a line $K$ such that 1) $K$ is parallel to the agreed axis and 2) $K$ is not passing through any robot.
\end{theorem}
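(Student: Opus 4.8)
The proof is split into the familiar two directions. For the \textbf{``only if''} direction, the plan is to recall the argument from \cite{BoseKAS21}: if the initial configuration is symmetric with respect to a line $K$ parallel to the agreed axis and not passing through any robot, then the reflection across $K$ is a nontrivial automorphism swapping the two half-planes, and an adversarial asynchronous scheduler can keep the two mirror-images of each robot perfectly synchronized, so any deterministic algorithm would have to produce a final configuration that is again symmetric with respect to $K$; since an arbitrary target pattern need not have this symmetry, $\mathcal{APF}$ cannot be solved. The only adaptation needed from the point-robot proof is to observe that nothing in that argument uses unobstructed visibility or free movement on the plane — the grid restriction and opacity only shrink the set of views, so the adversary's mirroring strategy still applies verbatim.

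For the \textbf{``if''} direction, the plan is to exhibit the algorithm of this section and argue it terminates correctly from any initial configuration that is \emph{not} symmetric with respect to such a line $K$. I would organize this as a sequence of phases, each moving the configuration into one of the distinguished classes defined in Section~\ref{model}: first, from an arbitrary asymmetric (or trivially-symmetric-only) configuration, drive the robots to a \emph{stable configuration} (two robots with light \texttt{decider} on a common vertical line, cleared neighborhoods); next, use the \emph{dominant half} notion — which is well-defined precisely because the configuration is not $K$-symmetric — to break the remaining symmetry and reach a \emph{leader configuration} (a unique robot with light \texttt{leader}, cleared lines); finally, with a leader fixed, the leader together with $\mathcal{L}_1$ induces a global coordinate system, after which the remaining robots are routed one at a time (using the \texttt{call}/\texttt{done} lights to serialize movements and thereby avoid collisions on the grid) into a compact line and then onto the target coordinates. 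For each phase I would prove (i) \emph{progress}: under the asynchronous scheduler some robot is always enabled and every move strictly advances a potential function; (ii) \emph{invariance}: the defining properties of the phase are preserved and no collision occurs; and (iii) that the asymmetry hypothesis is never violated, so the leader-election step cannot stall on a tie. The move-complexity bound $\mathcal{O}(kD)$ would then follow by summing the per-phase bounds.

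The \textbf{main obstacle} is the leader-election / symmetry-breaking step under obstructed visibility: a robot cannot see the whole configuration, so it cannot directly test whether $\mathcal{R}_I(r)$ is symmetric with respect to $K$ or decide globally which half is dominant. The plan is to handle this by having the two \texttt{decider} robots (which sit on cleared lines, hence can see along their own vertical line and into their left half) locally compare the binary sequences $\lambda_1,\lambda_2$ restricted to the lines they can actually observe, propagating partial information via the lights \texttt{symmetric}, \texttt{terminal1}, and \texttt{decider}; one must argue that the union of the local views available to the two deciders suffices to detect the unique dominant half whenever the global configuration is asymmetric, and that in the genuinely symmetric (excluded) case the algorithm would correctly report \texttt{symmetric} rather than make an inconsistent choice. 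A secondary but delicate point is guaranteeing collision-freedom during the serialized movements on the grid, where a single unit step can cause two robots to collide or to become mutually invisible; the \texttt{call}/\texttt{done} handshake and the auxiliary lines ($K$, $L_{H1}$, $L_{t_j-1}$) are the tools I would use, and each movement rule must be checked against the invariant that the target grid point is currently empty and will remain so until the moving robot arrives.
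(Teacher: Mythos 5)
Your proposal takes essentially the same route as the paper: the ``only if'' direction is delegated to the mirroring/adversary argument of \cite{BoseKAS21} (unchanged by opacity or the grid), and the ``if'' direction is established constructively by the phased algorithm --- leader election through the stable and leader configurations, followed by compact-line formation and sequential target placement --- with per-phase progress, collision-freedom, and symmetry-preservation arguments exactly as the paper's Lemmas do. One small simplification you could note: a \texttt{terminal1} or \texttt{decider} robot can in fact see \emph{all} of $\mathcal{R}_I(r)$ (no robots lie strictly between consecutive occupied vertical lines), so the symmetry test of $\mathcal{R}_I(r)$ with respect to $K$ is a purely local computation and does not require combining the two deciders' views.
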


For the rest of the paper, we shall assume that the initial configuration $\mathbb{C}(0)$ does not admit the
unsolvable symmetry stated in Theorem \ref{thm1.1}. Our algorithm works in two stages. In the first stage, Leader Election, the robots will agree on a leader. Since there are no common agreement on a global coordinate system, the robots will not be able to agree on the embedding of the pattern on the grid. Thus leader election is necessary for robots to agree on a global coordinate. This stage is further divided in two phases namely \textit{Phase 1} and \textit{Phase 2}. Next in the pattern formation stage, the robots will move to form the input pattern embedded on the grid. The stages are described in details in \ref{ss1} and \ref{ss2}.
\subsection{Leader Election}
\label{ss1}
\subsubsection{Phase 1}
\begin{figure}[ht!]
  \centering
   \begin{minipage}{1\linewidth}
  \begin{algorithm}[H]
     \setstretch{0.5}
    \SetKwInOut{Input}{Input}
    \SetKwInOut{Output}{Output}
    \SetKwProg{Fn}{Function}{}{}
    \SetKwProg{Pr}{Procedure}{}{}

    \Pr{\textsc{Phase1()}}{

    $r \leftarrow$ myself


    \uIf{$r.light =$ \texttt{off}}
        {
    
     \uIf{$r$ is terminal and there is no robot in \textbf{$H_{L}^{O}(r)$} and no  robot \texttt{leader1} in $\mathcal{R}_I(r)$}{$r.light \leftarrow$ \texttt{terminal1} \\Move left}
     \ElseIf{there are exactly two robots in \textbf{$\mathcal{L}_I(r)$} having light \texttt{terminal1} and $r$ is on $K$}{
         $r.light \leftarrow$ \texttt{leader1} }
         
         }

    \uElseIf{$r.light = $\texttt{terminal1}}
    {
    \uIf{there is a robot with light \texttt{terminal1} on $\mathcal{L}_{V}(r)$ }
        {
        \uIf{no robots on $K \cap \mathcal{R}_I(r)$}{
            \uIf{$\mathcal{R}_I(r)$ is symmetric with respect to $K$ }{$r.light \leftarrow$ \texttt{symmetric}}
    
        \Else{\If{$r$ is in dominant half}{$r.light \leftarrow$ \texttt{leader1}} 
        
        }
                          
        }
        \ElseIf{there is a robot on $K \cap \mathcal{R}_I(r)$ with light \texttt{leader1}}{
            $r.light \leftarrow$ \texttt{off}
        }
         }
     
     \uElseIf{there is a robot with light \texttt{symmetric} on $\mathcal{L}_{V}(r)$}{$r.light \leftarrow$ \texttt{symmetric}}
     
     \uElseIf{there is a robot with light \texttt{leader1} or \texttt{off} on $\mathcal{L}_{V}(r)$}{$r.light \leftarrow$ \texttt{off}}
     
     \ElseIf{$r$ is singleton on $\mathcal{L}_{V}(r)$ and all robots in $\mathcal{R}_I(r)$ are \texttt{off}}{$r.light \leftarrow$ \texttt{leader1}}

    }

    \ElseIf{$r.light = $\texttt{symmetric}}
    {
    \If{there is a robot $r'$ with light \texttt{symmetric} or \texttt{decider} on $\mathcal{L}_{V}(r)$}
        {  \uIf{there is other robot both in $H_{U}^{C}(r)$ and        $H_{B}^{C}(r)$}{move vertically opposite to $r'$}
           \Else{$r.light \leftarrow$ \texttt{decider}}
        }
    }

  }

    \caption{\textbf{Phase 1}}
    \label{Algo_Phase1}
\end{algorithm}

 \end{minipage}
\end{figure}
The procedure \textit{Phase 1} starts from the initial configuration with the aim of forming a stable configuration or achieving a configuration with a robot with light \texttt{leader1}. Initially all the robots are located on an infinite grid with light \texttt{off}. On waking, each robot $r$ will check if the robot is terminal and if their open left half has no other robots and no  robot \texttt{leader1} in $\mathcal{R}_I(r)$. Note that there will be at most two and at least one such robot. These robots will change their lights to \texttt{terminal1} and move left (Figure \ref{Fig:initial}). 

\begin{figure}[ht]
     \includegraphics[width=0.6\linewidth]{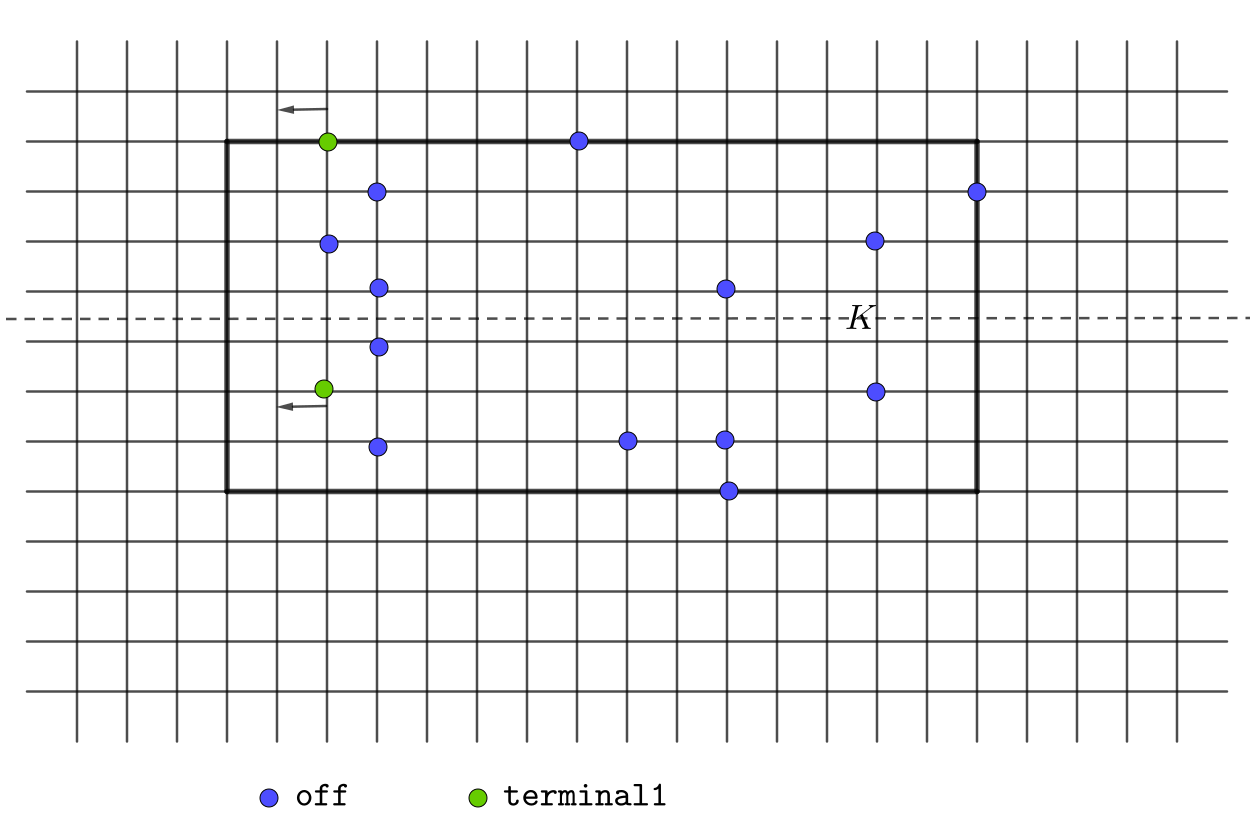}
     \caption{Terminal robots on the line $\mathcal{L}_1$ change lights to \texttt{terminal1} and move left.}\label{Fig:initial}
\end{figure}

If there is only one robot (say, $r$) with light \texttt{terminal1} on $\mathcal{L}_1$ and light of all robots on $\mathcal{R}_I(r)$ are \texttt{off}, then $r$ will change its light to \texttt{leader1} (Figure \ref{Fig:singleTerminal1}, \ref{Fig:singleTerminal12Leader1}). 
This might also happen due to asynchrony of the system that there are two robots with \texttt{terminal1} light but one (say, $r_1$) in $\mathcal{L}_1$ and the other one (say, $r_2$) is still in $\mathcal{R}_I(r_1)$. In this case, the robot on $\mathcal{L}_1$ will see that all robots on $\mathcal{R}_I(r_1)$ do not have lights \texttt{off} and will wait for $r_2$ to reach $\mathcal{L}_1$.


If there are two robots (say, $r_1$ and $r_2$) on $\mathcal{L}_1$ with \texttt{terminal1} light, then observe that both the robots $r_1$ and $r_2$ and all the robots on $\mathcal{R}_I(r_1)$ ($= \mathcal{R}_I(r_2)$) can recognise the line $K$. Now, if there exists a robot (say $r_l$) occupying the grid point $\mathcal{R}_I(r_1) \cap K$, then $r_l$  changes its light to \texttt{leader1} and the robots with light \texttt{terminal1} changes their light to \texttt{off} after seeing the robot $r_l$ with light \texttt{leader1} (Figure \ref{Fig:robotOnK}, \ref{Fig:robotOnK2Leader1}). If the grid point $\mathcal{R}_I(r_1) \cap K$ is empty, then the robots $r_1$ and $r_2$ check the symmetry of the line $\mathcal{R}_I(r_1)$ ($=\mathcal{R}_I(r_2)$). If $\mathcal{R}_I(r_1)$ is not symmetric, then the robot $r_i$($i = 1$ or, $2$), which is on the dominant half changes its light to \texttt{leader1} (Figure \ref{Fig:Terminal1Assymetry}, \ref{Fig:Terminal1Assymetry2Leader1}). On the other hand if $\mathcal{R}_I(r_1)$ is symmetric, then both the robots $r_1$ and $r_2$ change their lights from \texttt{terminal1} to \texttt{symmetric} (Figure \ref{Fig:Terminal1Symm}, \ref{Fig:Terminal1Symm2Symmetry}). Note that, due to asynchrony it might happen that one robot, let's say $r_1$ changes its light to \texttt{symmetric} before $r_2$. Then $r_1$ does not move until it sees another robot ($r_2$) on $\mathcal{L}_V(r_1)$ with light \texttt{symmetric}. This technique prevents the configuration from getting symmetric in this phase. Now after seeing another robot ($r_2$) on $\mathcal{L}_V(r_1)$ with \texttt{symmetric} light, $r_1$  moves vertically opposite of $r_2$ until the closed upper half or the closed bottom half of $r_1$ has no other robots (similar argument can be given for $r_2$) (Figure \ref{Fig:SymmetryMovement}). After reaching their designated positions both the robots $r_1$ and $r_2$ change their light from \texttt{symmetric} to \texttt{decider}, achieving a stable configuration (Figure \ref{Fig:StableConfig}). Note that, due to asynchrony it might happen that one robot, let's say $r_1$ reaches to its designated position and changes its light to \texttt{decider} before $r_2$.

The following Lemmas \ref{lemma:stage1_1}, \ref{lemma:stage1_2} and \ref{lemma:stage1_3} and Theorem \ref{Algo_Phase1} justify the correctness of the Algorithm \ref{Algo_Phase1}.

\begin{lemma}
\label{lemma:stage1_1}
  If the initial Configuration $\mathbb{C}(0)$ has exactly one robot on $\mathcal{L}_1$, then  $\exists$ $T_1 > 0$ such that there will be exactly one robot with light \texttt{leader1} in $\mathbb{C}(T_1)$.  
\end{lemma}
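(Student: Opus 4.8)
The plan is to follow the unique robot $r$ on $\mathcal{L}_1$ through its first two activations and to show that the rest of the configuration stays frozen in between, so that $r$ is the only robot that can ever acquire the colour \texttt{leader1}.

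First I would record the two structural facts that drive everything. Since $r$ sits on the leftmost occupied vertical line and is alone on it, $r$ is (trivially) terminal with $H_L^O(r)=\emptyset$, and every other robot $r'$ satisfies $r\in H_L^O(r')$; moreover $\mathbb{C}(0)$ contains no robot with light \texttt{leader1}. Reading the \texttt{off}-branch of Algorithm \ref{Algo_Phase1} one sees that (i) the first guard can fire only for a robot with empty left open half, hence never for any $r'\neq r$ as long as $r$ stays to the left of everyone, and (ii) an \texttt{off} robot can turn \texttt{leader1} only when two \texttt{terminal1} robots sit on some vertical line. So as long as $r$ is strictly leftmost and is the only non-\texttt{off} robot, no \texttt{off} robot recolours or moves (the other guards test for colours that are absent); in particular the configuration is completely static until $r$ is first activated. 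When $r$ is activated it satisfies the first guard, recolours to \texttt{terminal1} and steps one unit left; afterwards the old $\mathcal{L}_1$ is empty, $r$ is still alone on its new vertical line and still strictly left of every other robot, so (i)--(ii) continue to apply and $r$ is now the unique \texttt{terminal1} robot.

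Next I would verify that the state ``$r$ has \texttt{terminal1}, every other robot has \texttt{off}'' persists under the asynchronous scheduler until $r$'s next activation: by (i)--(ii) no \texttt{off} robot can recolour, $r$ is the only non-\texttt{off} robot, and $r$ does nothing until it is scheduled. In particular every robot of $\mathcal{R}_I(r)$ is still \texttt{off}. By the fairness of the scheduler $r$ is activated again after finite time; then in the \texttt{terminal1}-case the first three guards fail (there is no other robot on $\mathcal{L}_V(r)$, hence none there with colour \texttt{terminal1}, \texttt{symmetric}, \texttt{leader1} or \texttt{off}) while the last guard holds ($r$ is a singleton on $\mathcal{L}_V(r)$ and all of $\mathcal{R}_I(r)$ is \texttt{off}), so $r$ recolours to \texttt{leader1}. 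Take $T_1$ to be the time just after this cycle; then $\mathbb{C}(T_1)$ has exactly one robot with light \texttt{leader1}. Finally I would note that Algorithm \ref{Algo_Phase1} has no rule for a robot already coloured \texttt{leader1}, so $r$ keeps that colour and stays put; the remaining robots are \texttt{off} with $r$ in their left open half and there is no \texttt{terminal1} robot anywhere, so by (i)--(ii) again no new \texttt{leader1} robot can appear. Hence $r$ stays the unique \texttt{leader1} robot for the rest of Phase 1, which is more than enough for the lemma.

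The part needing the most care is the asynchrony bookkeeping: one must argue that the adversary cannot, between $r$'s \textsc{Look} and \textsc{Move} or between two activations of $r$, exploit some other robot's stale snapshot to sneak in a colour change or a move that breaks the invariant. This is precisely what facts (i) and (ii) preclude --- the only colourings that can matter, \texttt{terminal1} and \texttt{leader1}, require either an empty left open half or two pre-existing \texttt{terminal1} robots, and the adversary can arrange neither once $r$ sits to the left of all robots, while every remaining guard in Algorithm \ref{Algo_Phase1} tests for colours (\texttt{symmetric}, \texttt{decider}) that never arise in this run.
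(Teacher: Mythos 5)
Your argument is correct and follows essentially the same route as the paper's own proof: the unique robot on $\mathcal{L}_1$ turns \texttt{terminal1}, moves left, and on its next activation satisfies the singleton/\texttt{off}-right-line guard and turns \texttt{leader1}. The only difference is that you spell out the invariant ("all other robots stay \texttt{off} and never satisfy any recolouring guard") that the paper merely asserts in one sentence, which is a welcome but not essentially different elaboration.
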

\begin{proof}
 Let us assume the initial configuration $\mathbb{C}(0)$ has exactly one robot (say, $r$) on $\mathcal{L}_1$. According to the Algorithm \ref{Algo_Phase1}, when $r$ wakes it will see that it is a terminal robot, the open left half is empty and no \texttt{leader1} in $\mathcal{R}_I(r)$. Then it  changes the light to \texttt{terminal1} and moves left. Note that, after $r$ moves left, $\mathcal{L}_1$ now denotes the new vertical line where $r$ is located.  Now on waking again, $r$ sees that it is the only robot on $\mathcal{L}_1$ with light \texttt{terminal1} and   all other robots on $\mathcal{R}_I(r)$ having light \texttt{off}. In this case, $r$ changes the light to \texttt{leader1}. Note that all the other robots in this case  have the light \texttt{off} throughout the Algorithm \ref{Algo_Phase1}. So during the execution of Algorithm \ref{Algo_Phase1}, if this case occurs, there will be exactly one robot with light \texttt{leader1} (Figure \ref{Fig:singleTerminal1}, \ref{Fig:singleTerminal12Leader1}).
 
 \begin{figure}[!htb]\centering
   \begin{minipage}{0.45\textwidth}
    \includegraphics[width=1.15\linewidth]{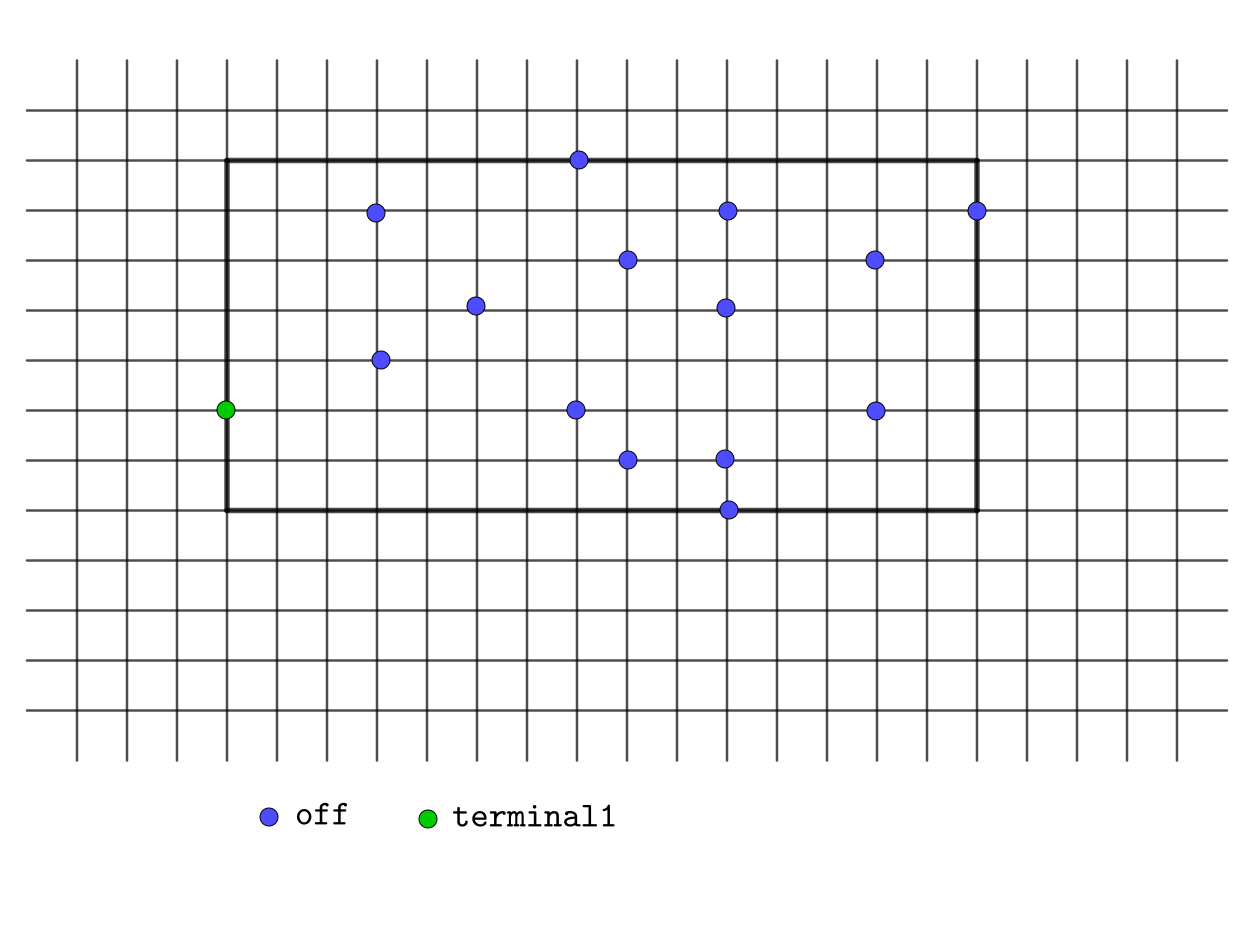}
     \caption{Single robot on $\mathcal{L}_1$ with light \texttt{terminal1}.}\label{Fig:singleTerminal1}
   \end{minipage}
   \hspace{1mm}
   \begin {minipage}{0.45\textwidth}
    \includegraphics[width=1.15\linewidth]{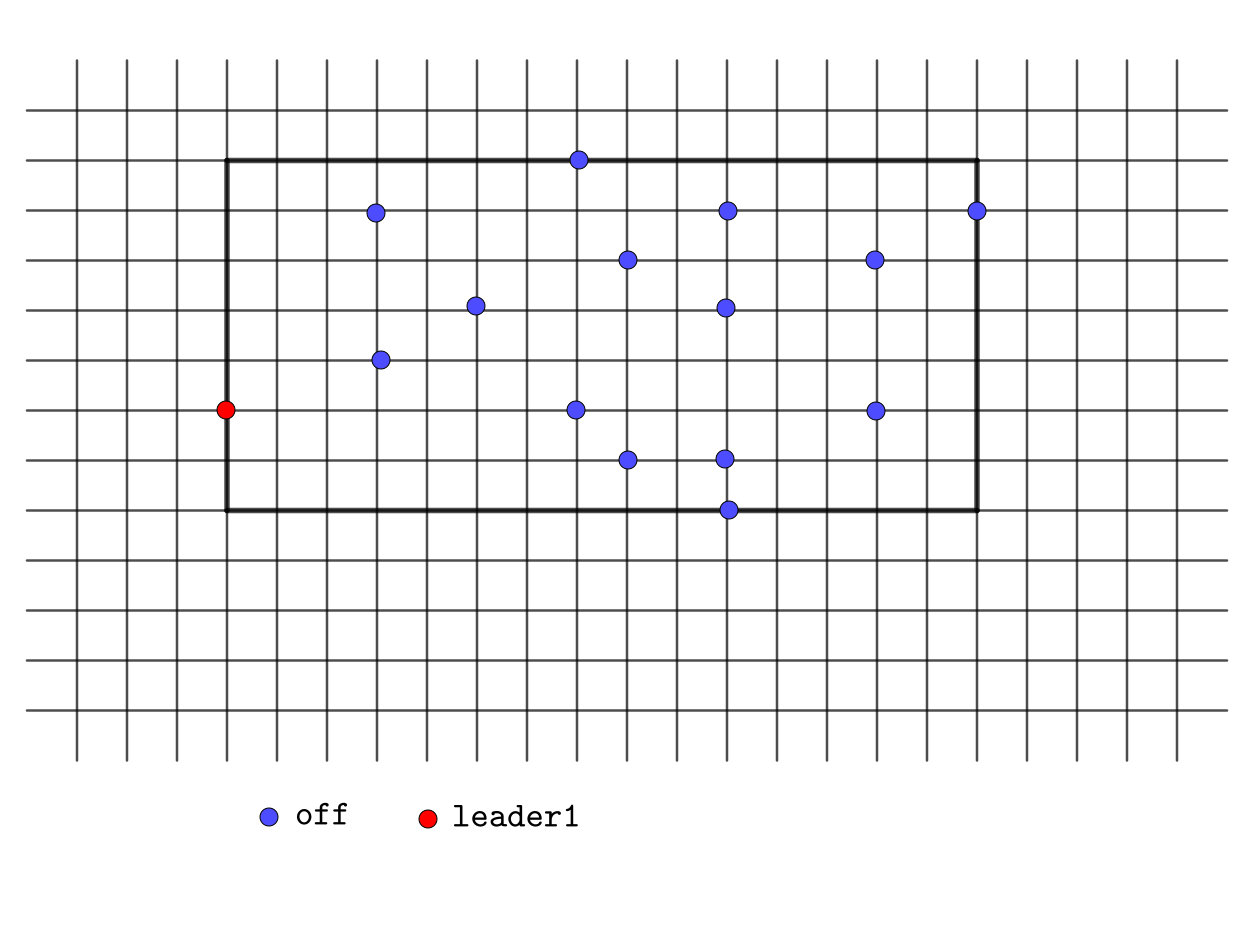}
     \caption{The single robot with light \texttt{terminal1} changes its light to \texttt{leader1}.}\label{Fig:singleTerminal12Leader1}
   \end{minipage}
\end{figure}

\end{proof}
\begin{lemma}
\label{lemma:stage1_2}
  If the initial configuration $\mathbb{C}(0)$ has exactly two robots on $\mathcal{L}_1$, then $\exists$ $T' >0$ such that $\mathbb{C}(T')$ is either a stable configuration or there is exactly one robot with light \texttt{leader1} in  $\mathbb{C}(T')$.
\end{lemma}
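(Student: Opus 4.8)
The plan is to make rigorous the case analysis sketched in the paragraphs preceding the lemma. Let $r_1,r_2$ denote the two robots lying on $\mathcal{L}_1$ in $\mathbb{C}(0)$; under the standing non-symmetry assumption a two-robot configuration is excluded (it would be symmetric about the horizontal midline of $\overline{r_1r_2}$, which passes through no robot), so $k\ge 3$ and there is at least one robot strictly to the right of $\mathcal{L}_1$, making $\mathcal{R}_I(\cdot)$ well defined throughout. The first step is to observe that each $r_i$ is terminal (on $\mathcal{L}_V(r_i)=\mathcal{L}_1$ one open half is empty, since $\mathcal{L}_1$ carries exactly two robots), has $H_L^O(r_i)=\emptyset$ ($\mathcal{L}_1$ being leftmost), and that no robot has light \texttt{leader1}; hence by the \texttt{off}-branch of Algorithm \ref{Algo_Phase1} each $r_i$, on its first activation, sets its light to \texttt{terminal1} and steps one unit left. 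Alongside this I would set up and maintain the invariant that, until some robot acquires light \texttt{leader1}, no robot outside $\{r_1,r_2\}$ ever changes its light or position: such a robot $q$ always lies strictly right of the current leftmost occupied line, so $H_L^O(q)\neq\emptyset$, and the only other \texttt{off}-rule — turning \texttt{leader1} when $\mathcal{L}_I(q)$ carries exactly two \texttt{terminal1} robots and $q$ lies on $K$ — can be satisfied by at most the single robot sitting at $K$ on the line immediately right of $\{r_1,r_2\}$'s line, which is treated explicitly below.

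Second, I would split on the scheduler's interleaving of the two left-steps. Let $r_a\in\{r_1,r_2\}$ be the first to complete a left-step, arriving on the new leftmost occupied line $\mathcal{L}_1'$, and let $r_b$ be the other. If $r_b$ has not yet committed (in a Compute phase) to a left-step of its own, then every later \textsc{Look} of $r_b$ sees $r_a\in H_L^O(r_b)$, so $r_b$ never turns \texttt{terminal1}; and since $\mathcal{L}_I(r_b)=\mathcal{L}_1'$ then carries only $r_a$, $r_b$ never turns \texttt{leader1} either, so $r_b$ stays \texttt{off}. Then $r_a$ is a singleton on $\mathcal{L}_1'$ with $\mathcal{R}_I(r_a)$ entirely \texttt{off}, so the singleton rule makes $r_a$ turn \texttt{leader1}, and we are done with exactly one \texttt{leader1}. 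Otherwise $r_b$ does perform its (possibly stale-snapshot) left-step onto $\mathcal{L}_1'$: if $r_a$ has meanwhile turned \texttt{leader1}, then the next \textsc{Look} of $r_b$ makes it turn \texttt{off} by the rule ``\texttt{leader1}/\texttt{off} on $\mathcal{L}_V(r)$'', and again one \texttt{leader1} remains; if not, we reach the configuration in which $r_1$ and $r_2$ both lie on $\mathcal{L}_1'$ with light \texttt{terminal1} and all other robots are \texttt{off}.

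Third, I would analyse that last configuration, where $r_1,r_2$ both recognise the horizontal line $K$ through the midpoint of $\overline{r_1r_2}$ and the line $L:=\mathcal{R}_I(r_1)=\mathcal{R}_I(r_2)$ (unobstructed, since $\mathcal{L}_1'$ is leftmost and the original $\mathcal{L}_1$ is now empty). If $K\cap L$ holds a robot $r_l$, then $r_l$ is the unique robot whose $\mathcal{L}_I$ carries two \texttt{terminal1} robots and that lies on $K$, so $r_l$ turns \texttt{leader1}; afterwards each $r_i$ sees this \texttt{leader1} on $K\cap\mathcal{R}_I(r_i)$ and turns \texttt{off}, and the guard ``no \texttt{leader1} in $\mathcal{R}_I$'' keeps $r_1,r_2$ from re-activating — one \texttt{leader1}. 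If $K\cap L=\emptyset$ and $L$ is not symmetric with respect to $K$, the definition of the dominant half puts exactly one of $r_1,r_2$, say $r_1$, on it; $r_1$ turns \texttt{leader1}, and on its next \textsc{Look} $r_2$ sees a \texttt{leader1} on $\mathcal{L}_V(r_2)$ and turns \texttt{off} — one \texttt{leader1}. If $K\cap L=\emptyset$ and $L$ is symmetric, both turn \texttt{symmetric} (the asynchronous gap being absorbed by the rule that a \texttt{terminal1} robot with a \texttt{symmetric} robot on its vertical line becomes \texttt{symmetric}, while a \texttt{symmetric} robot does not move until its partner is \texttt{symmetric} or \texttt{decider}); each then repeatedly steps along $\mathcal{L}_1'$ away from the other, and since there are finitely many robots and the scheduler is fair, after finitely many such steps the upper (resp.\ lower) of the two has an empty closed upper (resp.\ bottom) half and turns \texttt{decider}. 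I would then verify directly that the resulting configuration meets all three clauses of a stable configuration: exactly two \texttt{decider} robots and all others \texttt{off} (by the invariant above); their common vertical line $\mathcal{L}_1'$ carries only them and each of their horizontal lines is empty (each \texttt{decider} robot lying above, resp.\ below, every remaining robot); and each has empty left open half and an empty closed vertical half on the side toward which it moved. Taking $T'$ at the appearance of the second \texttt{decider} finishes this sub-case, and hence the lemma in all cases.

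I expect the genuine obstacle to be the asynchrony bookkeeping: one must check that the ``stale snapshot'' interleavings and the transient mixed states (\texttt{leader1}/\texttt{terminal1}, \texttt{terminal1}/\texttt{symmetric}, \texttt{symmetric}/\texttt{decider}) are all drained by the explicit wait-clauses of Algorithm \ref{Algo_Phase1} without ever producing two \texttt{terminal1} robots on distinct vertical lines or a deadlock, while the invariant on the other robots is preserved throughout. The second delicate point, confined to the symmetric sub-case, is showing that the ``move vertically opposite'' loop terminates, stays collision-free (the two robots travel along the otherwise-empty line $\mathcal{L}_1'$ in opposite directions), and leaves the two horizontal lines of the \texttt{decider} robots empty — which is exactly what clauses 2 and 3 of the stable-configuration definition demand.
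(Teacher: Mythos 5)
Your proposal is correct and follows essentially the same route as the paper's proof: both robots on $\mathcal{L}_1$ turn \texttt{terminal1} and move left, and the argument then splits on whether $K\cap\mathcal{R}_I$ is occupied, asymmetric, or symmetric, the last case ending in a stable configuration via the \texttt{symmetric}$\to$\texttt{decider} moves. Your treatment is in fact slightly more thorough on the asynchrony: you explicitly cover the interleaving in which the second robot never becomes \texttt{terminal1} (because the first has already stepped left before the second's \textsc{Look}), which the paper's proof of this lemma glosses over and which is resolved by the singleton-\texttt{terminal1} rule exactly as you describe.
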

\begin{proof}
 Let $r_1$ and $r_2$ be two robots on $\mathcal{L}_1$ in the initial configuration $\mathbb{C}(0)$. In this case, both the robots change their lights to \texttt{terminal1} and move left. Note that after moving left, $\mathcal{L}_1$ now denotes the vertical line on which the robots are located now. Due to asynchrony of the system, two cases may occur:
 
 \textbf{Case 1:} Both the robots $r_1$ and $r_2$ reaches\textbf{ $\mathcal{L}_1$} before waking again. In this case, $r_1$ and $r_2$ can see each other and can calculate the line \textbf{$K$}. Now if there is another robot (say $r_l$) on the intersection of $K$ and $\mathcal{R}_I(r_1)$ ($\mathcal{R}_I(r_1) = \mathcal{R}_I(r_2)$), then $r_l$  changes the light to \texttt{leader1}. Note that $r_l$ can see both $r_1$ and $r_2$, so it can calculate the line $K$ itself. After seeing $r_l$ with light \texttt{leader1}, the robots with light \texttt{terminal1} change their lights to \texttt{off} (Figure \ref{Fig:robotOnK}, \ref{Fig:robotOnK2Leader1}).

\begin{figure}[ht]\centering
   \begin{minipage}{0.45\textwidth}
     \includegraphics[width=1.15\linewidth]{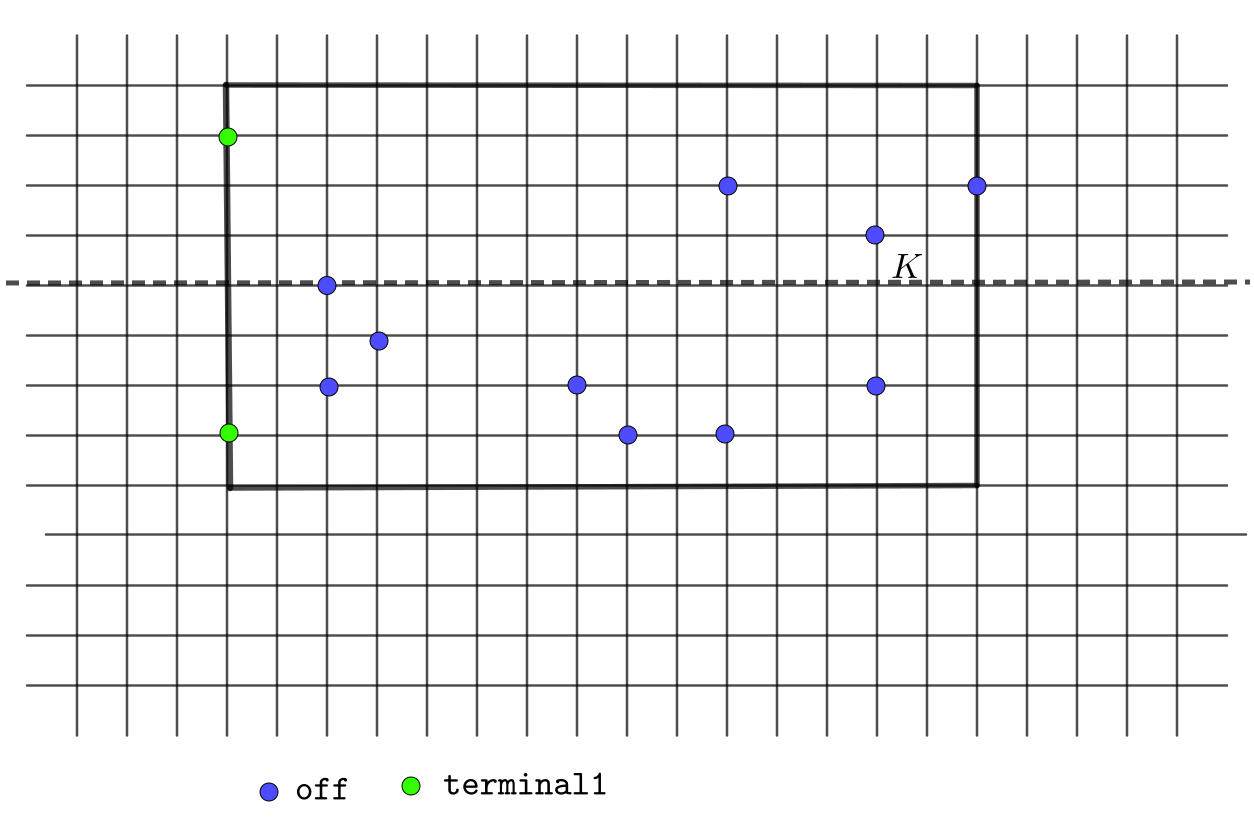}
     \caption{Both the robots with light \texttt{terminal1} see a robot on the right next occupied vertical line and on the line $K$. }\label{Fig:robotOnK}
   \end{minipage}
   \hfill
   \begin {minipage}{0.45\textwidth}
    \includegraphics[width=1.15\linewidth]{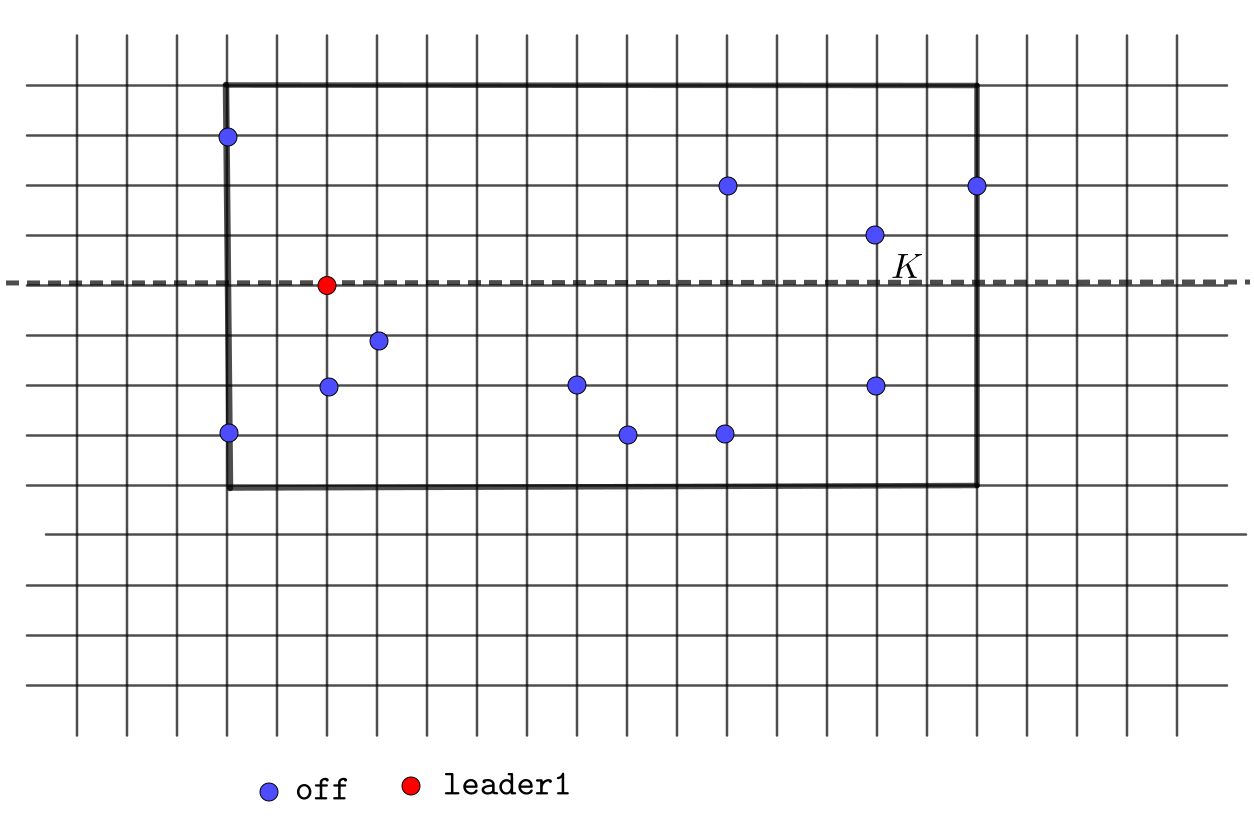}
     \caption{The robot on $K$ changes its light to \texttt{leader1} and next the \texttt{terminal1} robots see robot \texttt{leader1} and change their lights to \texttt{off}.}\label{Fig:robotOnK2Leader1}
   \end{minipage}
\end{figure}
 
  On the other hand, if there is no robot on the intersection of $K$ and $\mathcal{R}_I(r_1)$, then $r_1$ and $r_2$ both check the symmetry of $\mathcal{R}_I(r_1)$ with respect to the line $K$. If $\mathcal{R}_I(r_1)$ is asymmetric with respect to  $K$, then the robot in the dominant half ($r_1$ or $r_2$) will change it's light to \texttt{leader1} (Figure \ref{Fig:Terminal1Assymetry}, \ref{Fig:Terminal1Assymetry2Leader1}). 
  \begin{figure}[ht]\centering
   \begin{minipage}{0.45\textwidth}
     \includegraphics[width=1.15\linewidth]{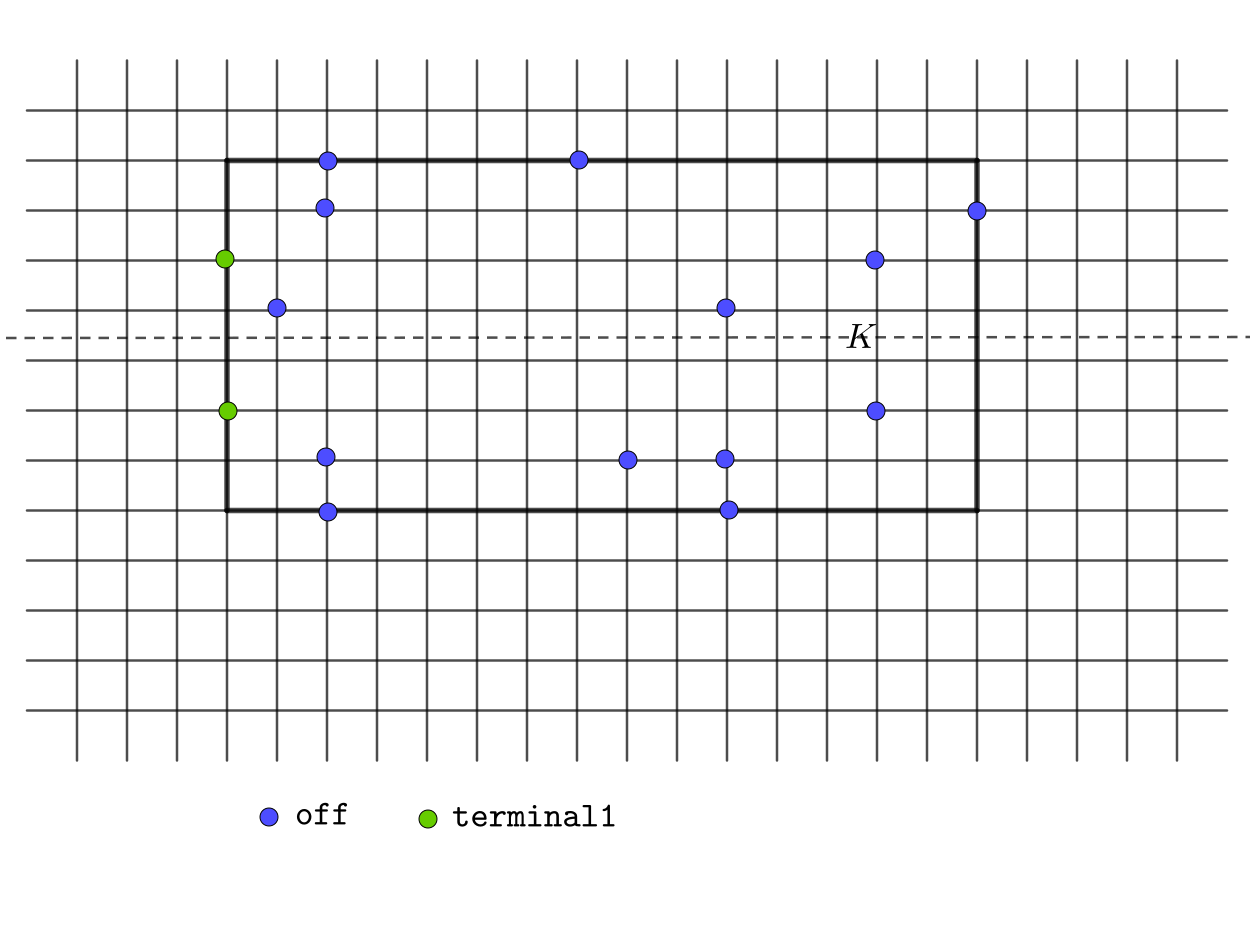}
     \caption{The robots with light \texttt{terminal1} see that the right next occupied vertical line is not symmetric.}\label{Fig:Terminal1Assymetry}
   \end{minipage}
   \hfill
   \begin{minipage}{0.45\textwidth}
    \includegraphics[width=1.15\linewidth]{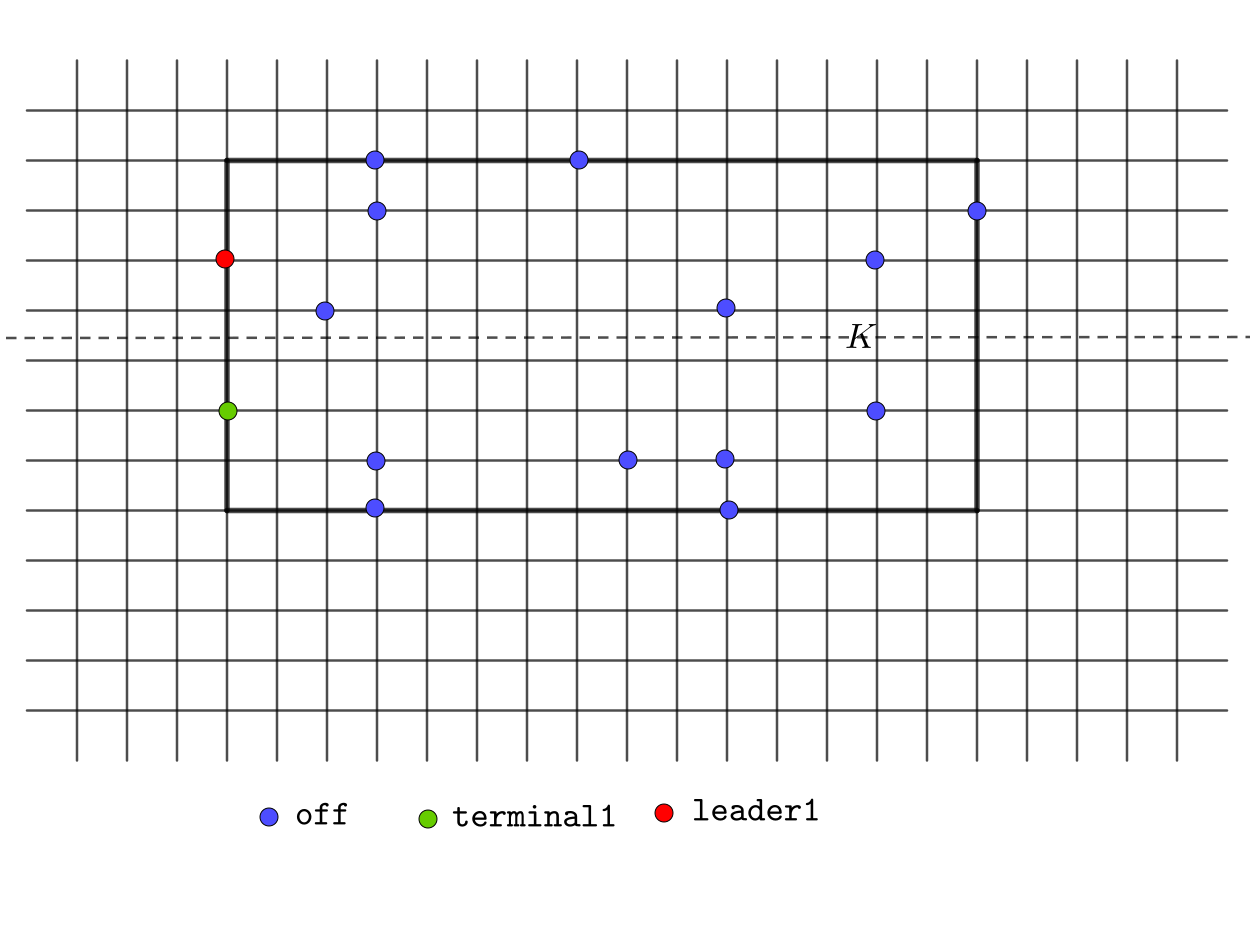}
     \caption{The robot with light \texttt{terminal1} on the dominant half changes the light to \texttt{leader1}.}\label{Fig:Terminal1Assymetry2Leader1}
   \end{minipage}
\end{figure}

\begin{figure}[ht]\centering
   \begin{minipage}{0.45\textwidth}
   \begin{center}
       \includegraphics[width=1.15\linewidth]{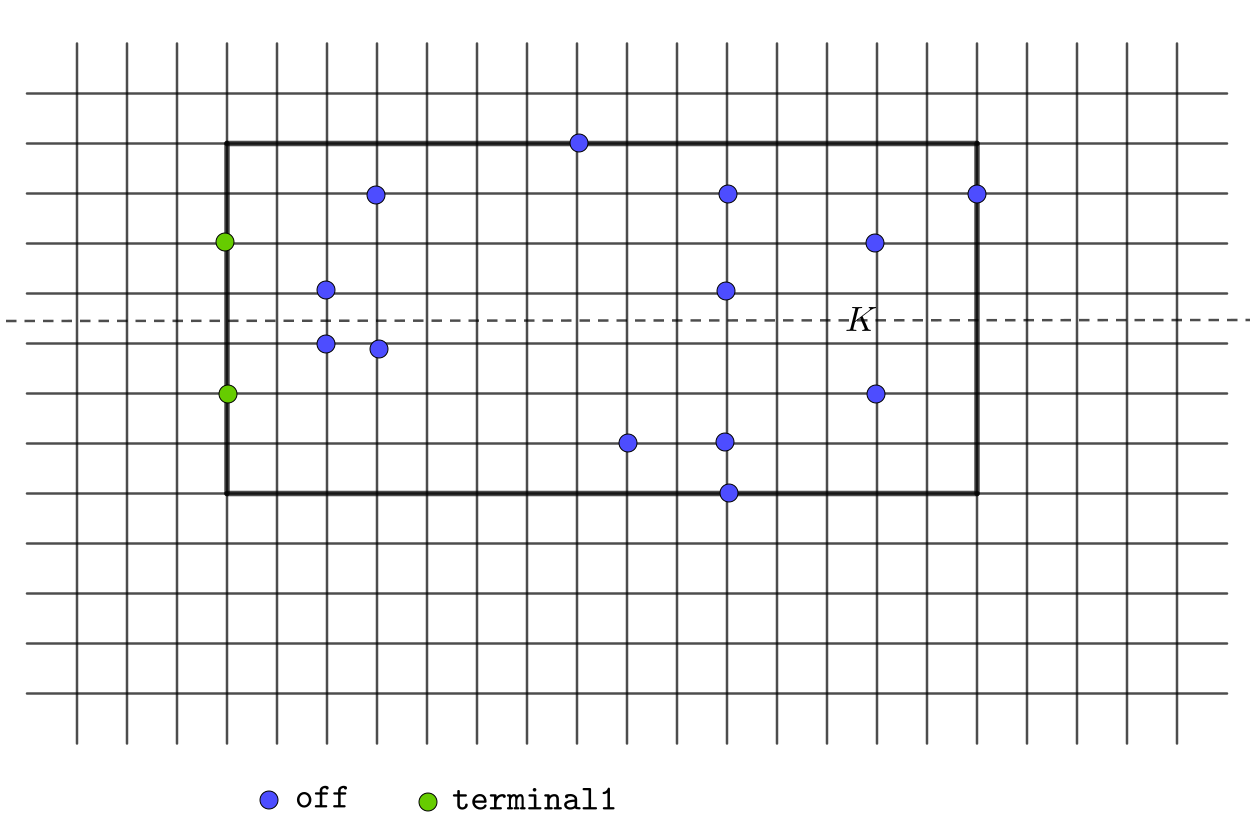}
     \caption{The \texttt{terminal1} robots see that the right next occupied  vertical line is symmetric.}\label{Fig:Terminal1Symm}
     
   \end{center}
 \end{minipage}
   \hspace{1mm}
   \begin{minipage}{0.45\textwidth}
   \begin{center}
       \includegraphics[width=1.15\linewidth]{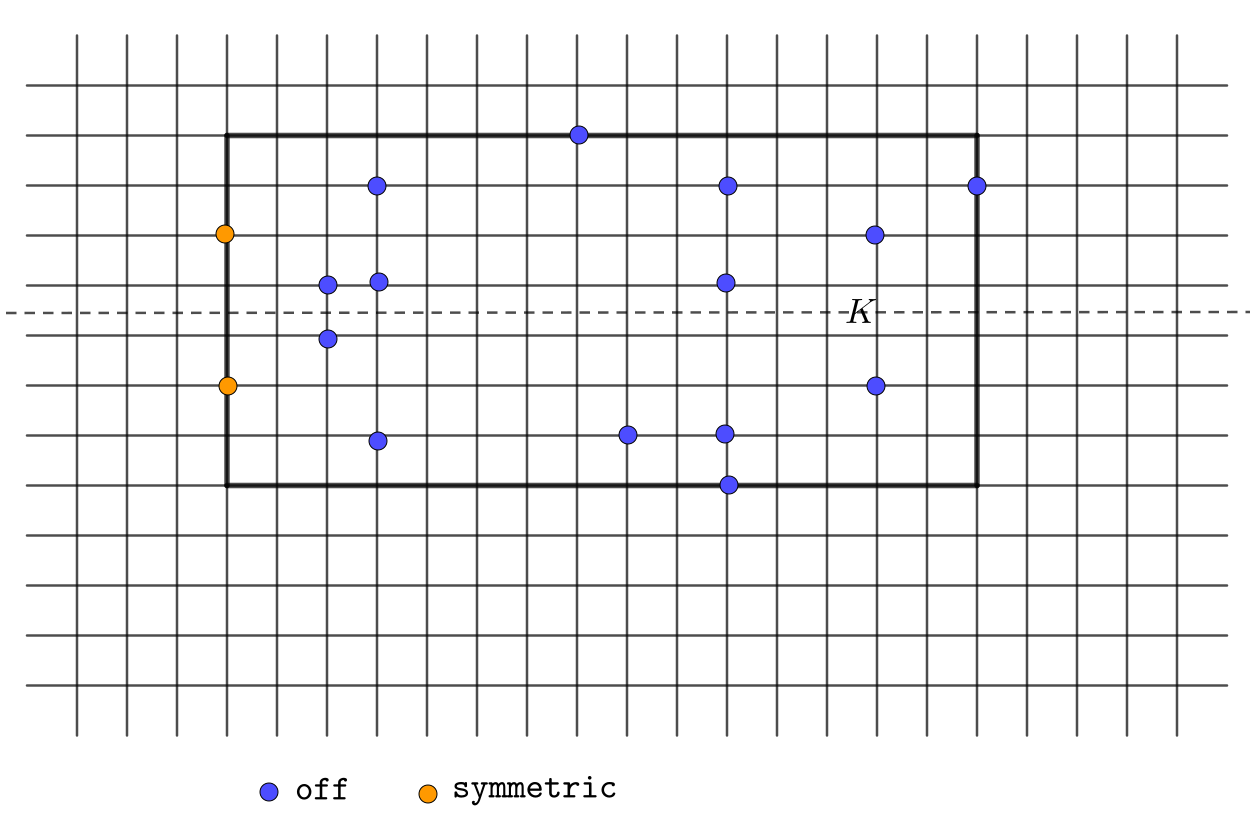}
     \caption{Robots with light \texttt{terminal1} change their lights to \texttt{symmetry}.}\label{Fig:Terminal1Symm2Symmetry}
    
   \end{center}
    
   \end{minipage}
\end{figure}

\begin{figure}[ht]\centering
   \begin{minipage}{0.45\textwidth}
    \includegraphics[width=1.15\linewidth]{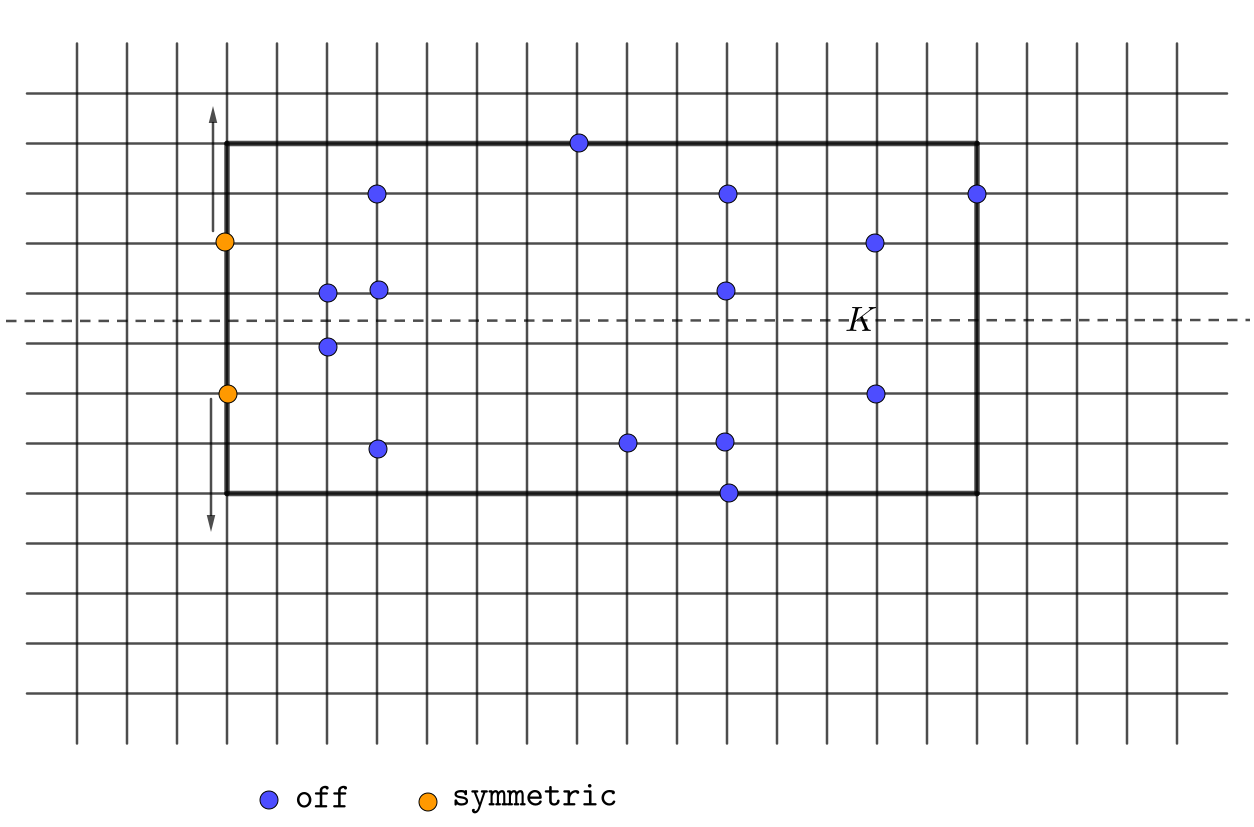}
     \caption{The robots with light \texttt{symmetry} move opposite of each other until they find either bottom or upper closed half has no other robots.}\label{Fig:SymmetryMovement}
   \end{minipage}
   \hspace{1mm}
   \begin{minipage}{0.45\textwidth}
    \includegraphics[width=1.15\linewidth]{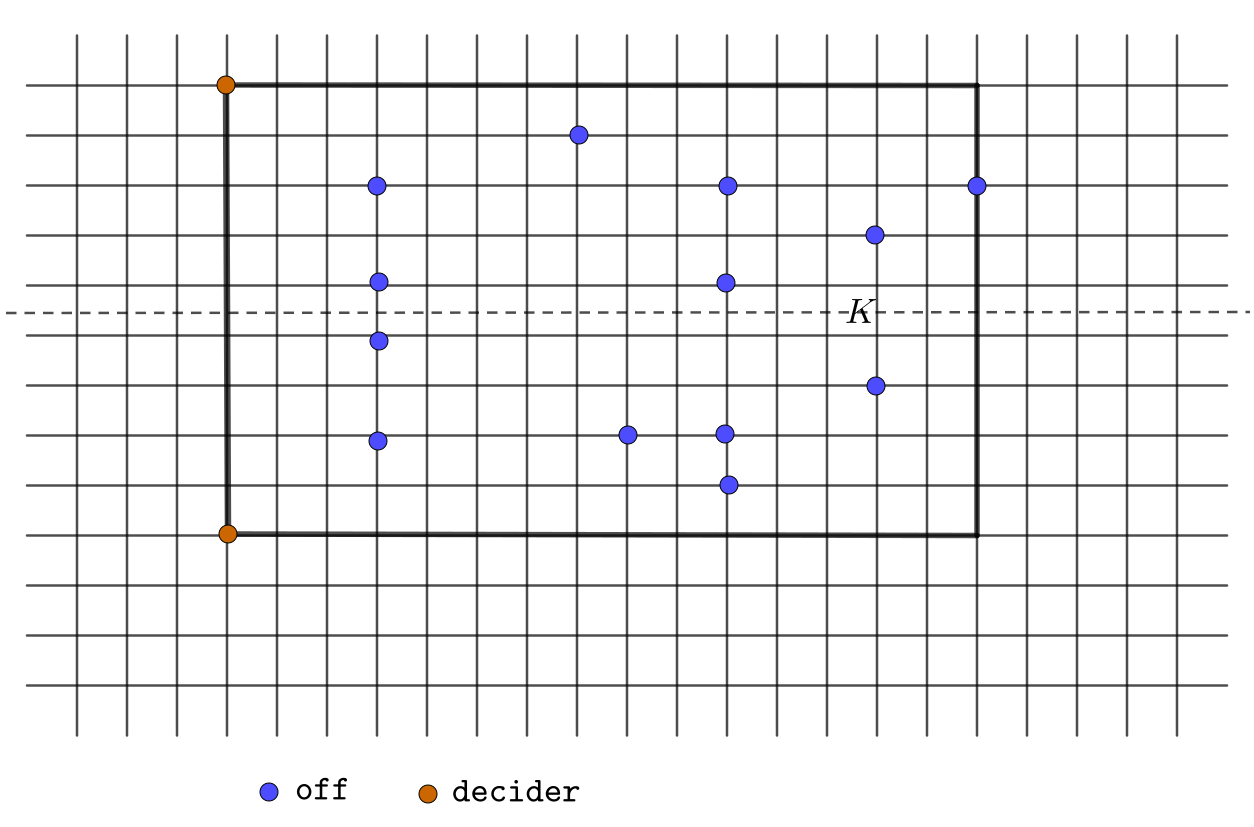}
     \caption{After reaching designated positions where upper closed or bottom closed half has no other robots, the robots with light \texttt{symmetry} change lights to \texttt{decider} to reach a stable configuration.}\label{Fig:StableConfig}
   \end{minipage}
\end{figure}

\begin{figure}[ht]

\end{figure}

  Otherwise if $\mathcal{R}_I(r_1)$ is symmetric with respect to $K$, then $r_1$ and $r_2$ change their light to \texttt{symmetric} (Figure \ref{Fig:Terminal1Symm}, \ref{Fig:Terminal1Symm2Symmetry}). Note that due to asynchrony, it may happen that one robot (say ,$r_1$) changes light to \texttt{symmetric} while $r_2$ still has the light \texttt{terminal1}. In this case, $r_1$ will not move until $r_2$ wakes and changes it's light to \texttt{symmetric}, after seeing the \texttt{symmetric} light of $r_1$. After both the robots $r_1$ and $r_2$ changed their lights to \texttt{symmetric}, they move vertically in the opposite direction of each other until one of the region $H_C^U$ or, $H_C^B$ has no other robots (Figure \ref{Fig:SymmetryMovement}). After this step, both $r_1$ and $r_2$ change their lights from \texttt{symmetric} to \textbf{\texttt{decider}}, thus reaching a stable configuration (Figure \ref{Fig:StableConfig}). Note that the robots will not check the symmetry of $\mathcal{R}_I(r_1) ( = \mathcal{R}_I(r_2)$) if the light of robots are \texttt{symmetric}.

  \textbf{Case 2:} Lets assume $r_1$  is on $\mathcal{L}_1$ and awake before another robot $r_2$ with \texttt{terminal1} light is yet to reach on $\mathcal{L}_1$. In this case, $r_1$ checks $\mathcal{R}_I(r_1)$ and finds out all the robots except one ($r_2$ with light \texttt{terminal1}) have light \texttt{off}. In this case, $r_1$ waits until $r_2$ reaches $\mathcal{L}_1$ and then by similar argument like in case 1, either reaches a configuration with a robot having light \texttt{leader1} or reaches a stable configuration.
  
  So after the execution of Algorithm \ref{Algo_Phase1}, the configuration is either stable or has exactly one robot with light \texttt{leader1}.
\end{proof}

\begin{lemma}
  \label{lemma:stage1_3}
  If the initial configuration $\mathbb{C}(0)$ has more than two robots on $\mathcal{L}_1$, then  $\exists$ $T^{''} >0$ such that $\mathbb{C}(T^{''})$ is either a stable configuration or there is exactly one robot with light \texttt{leader1} in  $\mathbb{C}(T^{''})$.
\end{lemma}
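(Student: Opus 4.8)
The plan is to reduce this case to the situation already analysed in Lemma \ref{lemma:stage1_2}. Suppose $\mathbb{C}(0)$ has at least three robots on $\mathcal{L}_1$. Then exactly two of them --- the topmost and the bottommost --- are terminal robots, every robot on $\mathcal{L}_1$ has light \texttt{off}, the open left half of each of them is empty (since $\mathcal{L}_1$ is the leftmost occupied vertical line), and no robot has light \texttt{leader1}. Hence the first \textsc{Look-Compute-Move} cycle of each terminal robot $r_1,r_2$ necessarily executes the first branch of \textsc{Phase1()}: the robot sets its light to \texttt{terminal1} and moves one step left. First I would check that while this happens no other robot changes its light, with a single exception: an interior robot of $\mathcal{L}_1$ that happens to lie on $K$ and that, once both $r_1$ and $r_2$ occupy the column immediately to its left with light \texttt{terminal1}, takes the \texttt{leader1}--branch available to an \texttt{off} robot whose $\mathcal{L}_I$ contains exactly two \texttt{terminal1} robots. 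Since $K$ meets $\mathcal{L}_1$ in a single grid point, at most one robot can do this, and that already gives the desired outcome; every robot strictly to the right of $\mathcal{L}_1$ keeps light \texttt{off}, as its open left half stays non-empty throughout.

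Assume next that $r_1$ and $r_2$ have both turned \texttt{terminal1} and reached the new leftmost line before any robot becomes \texttt{leader1}. Then the current configuration has exactly two \texttt{terminal1} robots on $\mathcal{L}_1$ and all other robots with light \texttt{off} --- precisely the hypothesis driving Cases 1 and 2 in the proof of Lemma \ref{lemma:stage1_2}. The only structural difference is that now $\mathcal{R}_I(r_1)$ (which equals $\mathcal{R}_I(r_2)$) is the former line $\mathcal{L}_1$, still occupied by the remaining interior robots, instead of the second original column; but the analysis there --- first testing whether some robot occupies $K\cap\mathcal{R}_I(r_1)$ (if so, that robot turns \texttt{leader1} and $r_1,r_2$ reset to \texttt{off}), otherwise testing the symmetry of $\mathcal{R}_I(r_1)$ with respect to $K$ (asymmetric: the robot on the dominant half turns \texttt{leader1}; symmetric: $r_1,r_2$ turn \texttt{symmetric}, move apart vertically until one closed half is free of other robots, and then turn \texttt{decider}, yielding a stable configuration) --- nowhere uses how many robots lie on $\mathcal{R}_I(r_1)$ nor that it is empty. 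Hence the same conclusion holds: after finitely many cycles $\mathbb{C}$ is stable or contains a unique \texttt{leader1} robot.

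The part I expect to be the real work is the asynchrony bookkeeping, which I would isolate as a closing step: ruling out any interleaving of activations that creates two robots with light \texttt{leader1}, and handling the schedules in which only one of $r_1,r_2$ has moved so far. The facts to establish are: (i) if $r_1$ has become \texttt{terminal1} and stepped left while $r_2$ is still \texttt{off} on the old $\mathcal{L}_1$, then $r_2$'s open left half now contains $r_1$, so $r_2$ can no longer take the first branch and keeps light \texttt{off}, whereas $r_1$ --- now the only robot of its column, with every robot of $\mathcal{R}_I(r_1)$ still \texttt{off} --- turns \texttt{leader1}; and once such a \texttt{leader1} lies in $\mathcal{R}_I$ of, or leftmost among, the other robots, one checks directly that no further \texttt{terminal1} robot is ever created; (ii) a move computed from a stale snapshot (a robot that decided to turn \texttt{terminal1} and step left before observing $r_1$'s move) still lands on $r_1$'s column and is subsequently reset to \texttt{off} by the ``\texttt{leader1} or \texttt{off} on $\mathcal{L}_V(r)$'' branch, so it produces no second \texttt{leader1}; (iii) when a robot on $K\cap\mathcal{R}_I(r_1)$ turns \texttt{leader1}, both \texttt{terminal1} robots are forced to \texttt{off}, again leaving a single \texttt{leader1}. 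Since these cases, together with the situation of the previous paragraph, exhaust all possibilities, combining (i)--(iii) with the reduction to Lemma \ref{lemma:stage1_2} produces a time $T''$ with the required property.
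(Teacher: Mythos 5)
Your proposal is correct and follows essentially the same route as the paper: observe that only the topmost and bottommost robots of $\mathcal{L}_1$ are terminal, let them turn \texttt{terminal1} and step left, and then reduce to the case analysis of Lemma~\ref{lemma:stage1_2}, noting that that analysis is insensitive to whether $\mathcal{R}_I(r_1)$ is the (still occupied) old $\mathcal{L}_1$ or some other column. Your closing asynchrony bookkeeping (a lone mover turning \texttt{leader1} while the other terminal robot is still \texttt{off}, the stale-snapshot mover being reset to \texttt{off}, and the robot on $K\cap\mathcal{R}_I(r_1)$ pre-empting both) is consistent with the algorithm as written and merely makes explicit what the paper compresses into ``the same argument as in both the cases of Lemma~\ref{lemma:stage1_2}.''
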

\begin{proof}
 Let us assume there are $h$ robots on the line $\mathcal{L}_1$, denoted by $\{r_i: i \in [1,h] \cap \mathbb{N}$ and $h >2\}$. Suppose $r_1$ and $r_2$ are the two terminal robots. Since $r_1$ and $r_2$ can identify themselves as terminal robots and can see there is no robot in \textbf{$H_{L}^{O}(r)$} and no  robot \texttt{leader1} in $\mathcal{R}_I(r)$, they will change their light to \texttt{terminal1} and move left. Note that after $r_1$ or $r_2$  move left, $\mathcal{L}_1$ now denotes the vertical line on which the robots are located now.
 
 Now, with the same argument as in  both the cases of Lemma \ref{lemma:stage1_2}, we can easily say that after the execution of Algorithm \ref{Algo_Phase1}, the configuration is either stable or has exactly one robot with light \texttt{leader1}.
\end{proof}
\begin{theorem}
For any initial configuration $\mathbb{C}(0)$, $\exists$ $T > 0$ such that either $\mathbb{C}(T)$ is a stable configuration or has exactly one robot with light \texttt{leader1}.
\end{theorem}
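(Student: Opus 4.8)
The plan is to obtain this statement as an immediate corollary of Lemmas \ref{lemma:stage1_1}, \ref{lemma:stage1_2} and \ref{lemma:stage1_3} via a case analysis on the number of robots that lie on $\mathcal{L}_1$ in the initial configuration $\mathbb{C}(0)$. First I would note that $\mathcal{L}_1$, being by definition the leftmost vertical line carrying at least one robot, always contains at least one robot; hence exactly one of the following three mutually exclusive and exhaustive possibilities holds: $\mathcal{L}_1$ contains exactly one robot, exactly two robots, or more than two robots.

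Next I would dispatch each case to the appropriate lemma. If $\mathcal{L}_1$ has exactly one robot, Lemma \ref{lemma:stage1_1} yields a time $T_1 > 0$ with $\mathbb{C}(T_1)$ having exactly one robot with light \texttt{leader1}. If $\mathcal{L}_1$ has exactly two robots, Lemma \ref{lemma:stage1_2} yields a time $T' > 0$ at which $\mathbb{C}(T')$ is either a stable configuration or has exactly one robot with light \texttt{leader1}. If $\mathcal{L}_1$ has more than two robots, Lemma \ref{lemma:stage1_3} gives the same conclusion at some time $T'' > 0$. Taking $T$ to be the time supplied by whichever lemma applies to $\mathbb{C}(0)$ then finishes the proof, since in every case $\mathbb{C}(T)$ is either stable or contains exactly one \texttt{leader1} robot.

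All the substantive work has already been carried out inside the three lemmas --- in particular the care needed to stop the configuration from becoming symmetric during \textit{Phase 1} and the treatment of the adversarial asynchronous scheduler (a terminal robot arriving on $\mathcal{L}_1$ before its counterpart, one robot switching to \texttt{decider} before the other, and so on) --- so the only point I would still need to verify here is that the case distinction is genuinely exhaustive, which is immediate from the definition of $\mathcal{L}_1$. Consequently I do not expect any real obstacle in this proof: it is essentially a bookkeeping step that packages Lemmas \ref{lemma:stage1_1}--\ref{lemma:stage1_3} into a single statement valid for every admissible initial configuration.
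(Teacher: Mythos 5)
Your proposal is correct and matches the paper's proof exactly: the paper likewise derives the theorem directly from Lemmas \ref{lemma:stage1_1}, \ref{lemma:stage1_2} and \ref{lemma:stage1_3} by the same exhaustive case split on whether $\mathcal{L}_1$ holds one, two, or more than two robots. Your write-up merely spells out the bookkeeping that the paper leaves implicit.
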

\begin{proof}
 This follows directly from Lemma \ref{lemma:stage1_1}, Lemma \ref{lemma:stage1_2} and Lemma \ref{lemma:stage1_3}. 
\end{proof}

\subsubsection{Phase 2}
After completion of \textit{Phase 1}, two configurations may occur. In the first possible configuration, there exists exactly one robot with light \texttt{leader1} and other robots with light \texttt{off}. The other possible configuration is a stable configuration. After completion of \textit{Phase 2}, these configurations transforms into a leader configuration. 

First, let us assume that after \textit{Phase 1}, the configuration transforms into the stable configuration. In this configuration, there are two robots with light \texttt{decider} such that their $H_U^C \cap H_L^C$ or $H_B^C \cap H_L^C$ have no other robots. All the other robots have light \texttt{off}. In \textit{Phase 2}, a robot with light \texttt{off} (say, $r$) checks whether it can see two robots with light \texttt{decider} on $\mathcal{L}_I(r)$. If $r$ can see two such robots on $\mathcal{L}_I(r)$ and $r$ is on $K \cap \mathcal{L}_V(r)$, then $r$ changes its light to \texttt{leader1}. Note that $r$ can calculate the line $K$ as it can see both the robots having light \texttt{decider}. On the other hand, if $r$ is not on $K \cap \mathcal{L}_V(r)$, then it checks the symmetry of $\mathcal{R}_I(r)$ with respect to $K$. If $\mathcal{R}_I(r)$ is not symmetric, then the terminal robot on $\mathcal{L}_V(r)$ on the dominant half changes its light to \texttt{leader1}. Otherwise, if $\mathcal{R}_I(r)$ is symmetric with respect to $K$, the robot closest to $K$ and on $\mathcal{L}_V(r)$ changes its light to \texttt{call}. Note that a robot with light \texttt{off} also  changes its light to \texttt{call} when it sees another robot with light \texttt{call} on the same vertical line. In this way, if $r$ is not on $K \cap \mathcal{L}_V(r)$ and $\mathcal{R}_I(r)$ is symmetric  with respect to $K$, all the robots on $\mathcal{L}_V(r)$ will eventually change their lights to \texttt{call} (Figure \ref{Fig:call1}, \ref{Fig:call2}).

Now let $r_{d1}$ be a robot with light \texttt{decider}. It checks whether there is a robot on $K \cap \mathcal{R}_I(r_{d1})$. Observe that in the beginning of \textit{Phase 2},  $r_{d1}$ can see  another robot with light \texttt{decider} (say $r_{d2}$) on $\mathcal{L}_V(r_{d1})$, thus can calculate the line $K$. If there is no robot on $\mathcal{R}_I(r_{d1}) \cap K$ and $\mathcal{R}_I(r_{d1})$ is not symmetric with respect to $K$, then if $r_{d1}$ was in dominant half, it changes its light to \texttt{leader1} (Figure \ref{Fig:Decider2Leader1}). In this case, since there are two robots with light \texttt{decider} and both are terminal, one will always be in the dominant half. Now if there is no robot on $\mathcal{R}_I(r_{d1}) \cap K$ and the line $\mathcal{R}_I(r)$ (where, $r \in \mathcal{R}_I(r_{d1})$) is symmetric with respect to $K$, after a certain time all the robots on $\mathcal{R}_I(r_{d1})$ will have light \texttt{call}. 

  \begin{algorithm}[H]
     \setstretch{0.1}
    \SetKwInOut{Input}{Input}
    \SetKwInOut{Output}{Output}
    \SetKwProg{Fn}{Function}{}{}
    \SetKwProg{Pr}{Procedure}{}{}

    \Pr{\textsc{Phase2()}}{

    $r \leftarrow$ myself


    \uIf{$r.light =$ \texttt{decider}}
         {
         
         \uIf{there is a robot with light \texttt{leader1} on $\mathcal{R}_I(r)$ or $\mathcal{L}_V(r)$}{$r.light \leftarrow$ \texttt{off}}
         \Else{
              
              \uIf{there is a robot with light \texttt{decider} on $\mathcal{L}_{V}(r)$ and no robots on $K \cap \mathcal{R}_I(r)$}
                     {
                     \uIf{$\mathcal{R}_I(r)$ is symmetric with respect to $K$ }
                          {\If{all robots in $\mathcal{R}_I(r)$ are \texttt{call} }{move right}}
    
                    \Else{\If{$r$ is in dominant half}{$r.light \leftarrow$ \texttt{leader1}}  }
                     }
                     
               \uElseIf{there is a robot with light \texttt{decider} in $\mathcal{R}_I(r)$}{move right}
               
               
               \ElseIf{there is a robot with light \texttt{call} in $\mathcal{L}_V(r)$ and no robot with light \texttt{decider} in $\mathcal{L}_I(r)$}{    
                           \If{all robots in $\mathcal{R}_I(r)$ are \texttt{call} }{move right }  
                             }

              }

         }
         
    \uElseIf{$r.light =$ \texttt{off}}
            {
            \If{there are two robots with light \texttt{decider} in $\mathcal{L}_I(r)$}
                      {
                      \uIf{$r$ is on $K \cap \mathcal{L}_V(r)$}{$r.light \leftarrow$ \texttt{leader1}}
                      \Else
                          {
                         \uIf{$\mathcal{R}_I(r)$ is symmetric with respect to $K$ }{
                                \If{$r$ is closest to $K$ or there          is a robot with light    \texttt{call} on $\mathcal{L}_V(r)$}{$r.light \leftarrow$ \texttt{call}}
                                 }
                         \Else{
                         
                              \If {$r$ is terminal on $\mathcal{L}_V(r)$ and in dominant half}{$r.light \leftarrow$ \texttt{leader1}}
                              }
                                  
                           }
                      
                       }
            
             }

    \uElseIf{$r.light =$ \texttt{call}}
            {
            
            \If{there is a robot with light \texttt{leader1} in $\mathcal{R}_I(r)$ or $\mathcal{L}_I(r)$}{$r.light \leftarrow$ \texttt{off}}
            
            }

    \ElseIf{$r.light =$ \texttt{leader1}}
            {
        \uIf{(all robots in $\mathcal{L}_I(r)$ are \texttt{off}) or ($H_{L}^{O}(r)$ is empty and all robots in $\mathcal{R}_I(r)$ are \texttt{off})}    
            {
            \uIf{there is no other robot in $H_{U}^{C}(r)$ or        $H_{B}^{C}(r)$}{
                            \uIf{there is other robot in $H_{L}^{C}(r)$      }{move left}
                            \Else{
                            $r.light \leftarrow$ \texttt{leader}}
                            
                            }
                            
            \Else{

                         \uIf{$r$ is not terminal on $\mathcal{L}_V(r)$}{move left}
                              
                         \uElseIf{$r$ is terminal on $\mathcal{L}_V(r)$ and there is another robot $r'$ on $\mathcal{L}_V(r)$}{move vertically opposite to $r'$}
                         
                         \ElseIf{$r$ is singleton on $\mathcal{L}_V(r)$}{move vertically according to its positive $y-axis$.}
                 }        
             }            
                         
         \ElseIf{
                  there is a robot with light \texttt{leader1} in $\mathcal{L}_I(r)$}
                  {$r.light \leftarrow$ \texttt{off}
                         }

            }

  }

    \caption{\textbf{Phase 2}}
    \label{leader selection}
\end{algorithm}

 \newpage

\begin{figure}[!htb]\centering
   \begin{minipage}{0.48\textwidth}
     \includegraphics[width=1.15\linewidth]{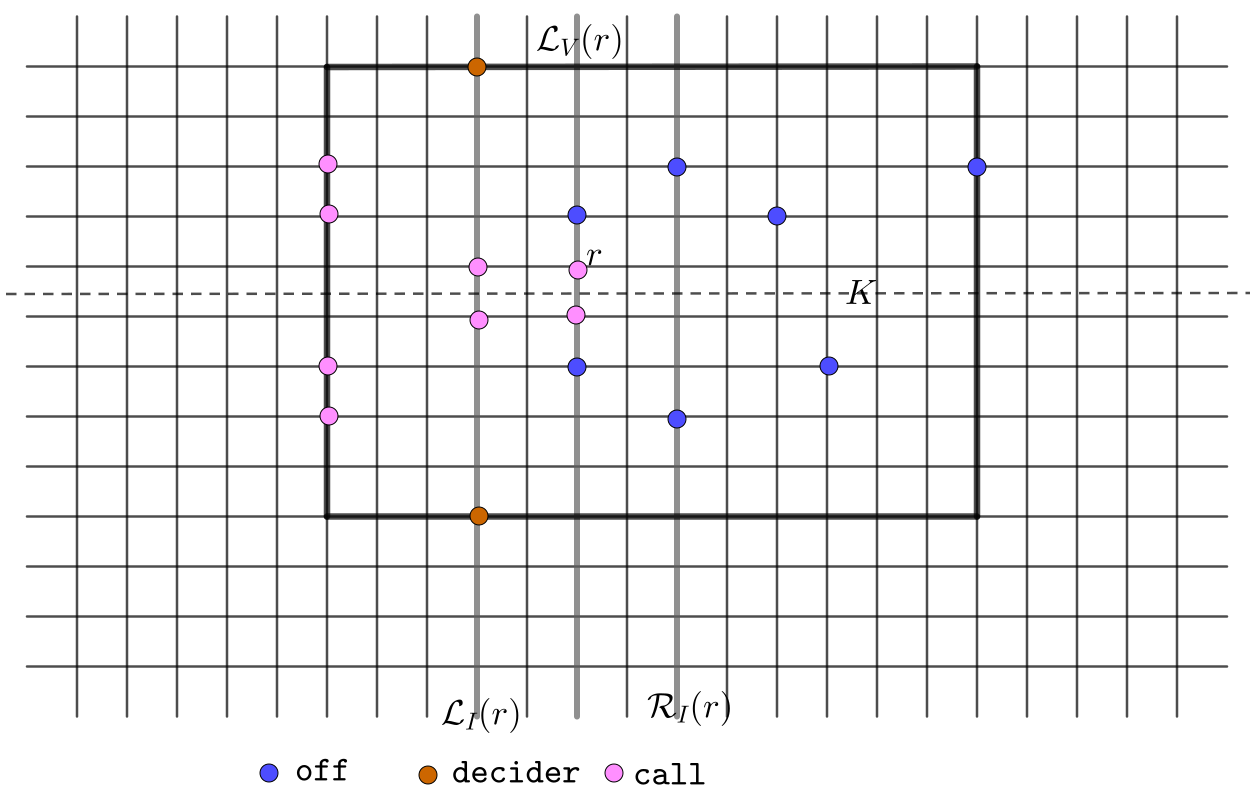}
     \caption{Robot $r$ (closest of $K$) sees two \texttt{decider} robot on $\mathcal{L}_I(r)$ and $\mathcal{R}_I(r)$ is symmetric, so it changes light to \texttt{call}. }\label{Fig:call1}
   \end{minipage}
   \hspace{1.5mm}
   \begin{minipage}{0.45\textwidth}
    \includegraphics[width=1.15\linewidth]{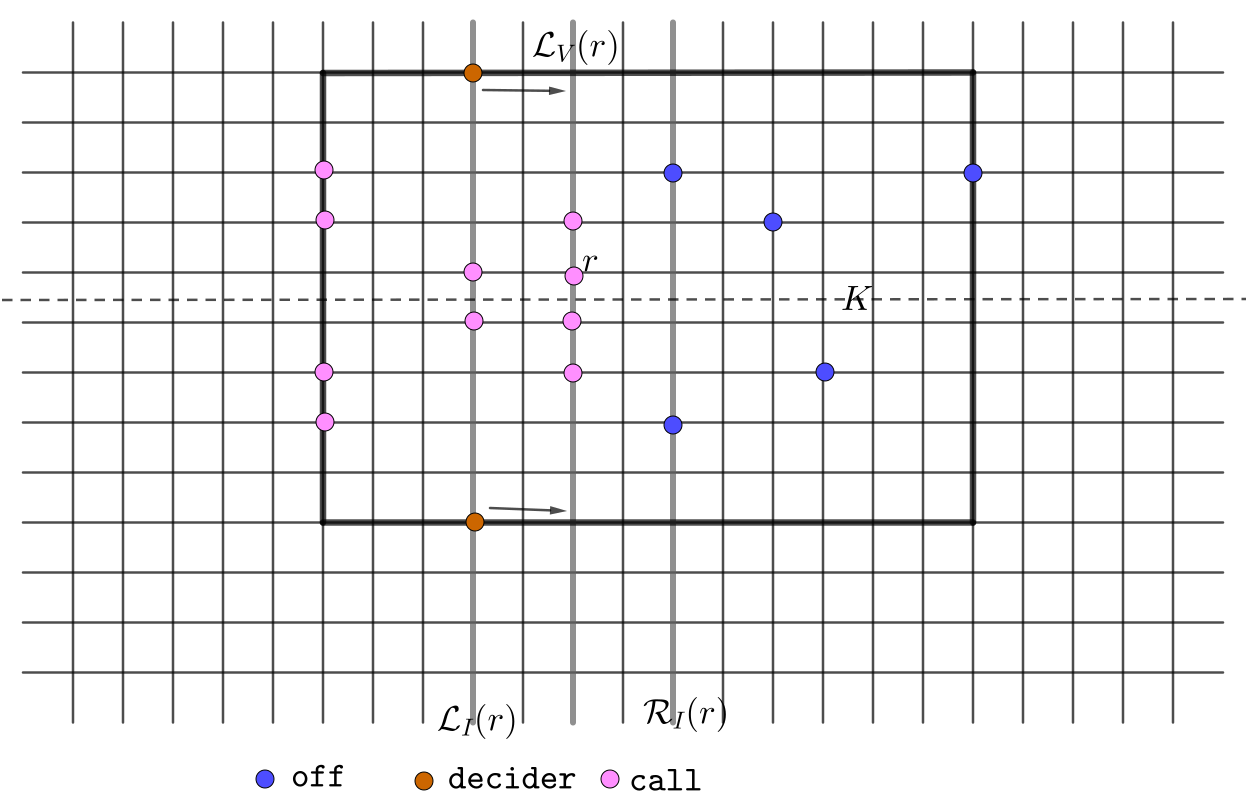}
     \caption{The robots on $\mathcal{L}_V(r)$, when see a robot with light \texttt{call} on the same vertical line, change their light to \texttt{call}. After all robots on $\mathcal{L}_V(r)$ have \texttt{call} light, the \texttt{decider} robots move right.}\label{Fig:call2}
   \end{minipage}
\end{figure}

Then if $\mathcal{R}_I(r_{d1})$ ($=\mathcal{R}_I(r_{d2})$) is symmetric, both the robots $r_{d1}$ and $r_{d2}$ move right. In case, due to asynchrony if one robot (say, $r_{d1}$) moves right and wakes before $r_{d2}$ moves, then even if $r_{d1}$ sees all robots on $\mathcal{R}_I(r_{d1})$ with light \texttt{call}, it will not move right until it sees there is no robot with light \texttt{decider} on $\mathcal{L}_I(r_{d1})$ and sees a robot with light \texttt{call} or $r_{d2}$ on $\mathcal{L}_V(r_{d1})$. Note that $r_{d1}$ will see $r_{d2}$ on $\mathcal{L}_I(r_{d1})$ until $r_{d2}$ moves right. On the other hand, $r_{d2}$ sees $r_{d1}$ on $\mathcal{R}_I(r_{d2})$ and then it moves right.
 
On a lighter note, in this algorithm a robot $r$ with light \texttt{off} basically calculates the line $K$ by seeing the robots with light \texttt{decider} on $\mathcal{L}_I(r)$ and checks the symmetry of $\mathcal{R}_I(r)$ with respect to the line $K$ and checks if it is on $K$. Note that, the robots with light \texttt{decider} also can calculate $K$ when they see each other on the same vertical line and check the symmetry of the next vertical line having robots and also do search for robots on $K$. As the configuration is solvable and movement of robots with light \texttt{decider} do not actually make the configuration unsolvable, after a certain time there will be some robot $r_0$ with light \texttt{leader1}. Note that a robot $r$ with light \texttt{call} changes the light to \texttt{off} if it sees any robot  with light \texttt{leader1} on $\mathcal{R}_I(r)$ or $\mathcal{L}_I(r)$. 

Now, let us consider the case where there is a robot with light \texttt{leader1} (say, $r_0$). It first checks whether all robots on the line $\mathcal{L}_I(r_0)$ have light \texttt{off} or if its left open half is empty. If one of the cases becomes true and all robots on $\mathcal{R}_I(r_0)$ have light \texttt{off}, then $r_0$ checks whether one of $H_U^C(r_0)$ or $H_B^C(r_0)$) has no other robots. If
so and also there is other robots on the left closed half ($H_L^C(r_0)$) of $r_0$, then it moves left until $H_L^C(r_0)$ has no other robots. Then it changes its light to \texttt{leader}. On the other hand, if both $H_U^C(r_0)$ and $H_B^C(r_0)$ have other robots, then $r_0$ checks whether it is a terminal robot on $\mathcal{L}_V(r_0)$. Let us assume $r_0$ is terminal on $\mathcal{L}_V(r_0)$. Now, if $r_0$ is singleton on $\mathcal{L}_V(r_0)$, then it moves according to its positive $y$-axis until either $H_U^C(r_0)$ or $H_B^C(r_0)$ has no other robots. If $r_0$ is not singleton but terminal on $\mathcal{L}_V(r_0)$, then there exists a robot (say, $r'$) on $\mathcal{L}_V(r_0)$. In this case, $r_0$ moves vertically in the opposite direction of $r'$ until $H_U^C(r_0)$ or $H_B^C(r_0)$ has no other robots. If $r_0$ was not terminal on $\mathcal{L}_V(r_0)$, then it moves left. Note that using such movements, $r_0$ always reaches a grid point such that either $H_U^C(r_0) \cap H_L^C(r_0)$ or $H_B^C(r_0) \cap H_L^C(r_0)$ has no other robots where it changes its light to \texttt{leader}. Also, it might happen that a robot with light \texttt{leader1} already sees another robot with light \texttt{leader1} (Figure \ref{Fig:2Leader1_1}). In this case, the robot with light \texttt{leader1}, who sees the other robot with light \texttt{leader1} on its left, changes light to \texttt{off} (Figure \ref{Fig:2Leader1to1Leader1}). Also if a robot with light \texttt{decider} (say, $r$) finds a robot with light \texttt{leader1} on $\mathcal{R}_I(r)$, then it changes its light to \texttt{off}. Thus after a certain time, there is exactly one robot  with light \texttt{leader} and others have light \texttt{off}.

So, after completion of \textit{Phase 2}, the configuration transforms into a leader configuration. The following Lemmas \ref{p2:l4}, \ref{P2:l5}, \ref{P2:l6} and \ref{P2:l7} justify the correctness of the Algorithm \ref{leader selection}.

\begin{lemma}
\label{p2:l4}
In \textit{Phase 2}, a robot $r$ with light \texttt{decider} changes its light to \texttt{leader1} only if $H_L^C(r)$ has no other robots.
\end{lemma}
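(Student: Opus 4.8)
The plan is to reduce the lemma to reading the guards of a single line of Algorithm~\ref{leader selection}. Scanning the \texttt{decider} branch, the only assignment $r.light\leftarrow\texttt{leader1}$ occurs inside the sub-case whose guards are: there is no \texttt{leader1} robot on $\mathcal{R}_I(r)$ or on $\mathcal{L}_V(r)$; there is a (second) \texttt{decider} robot on $\mathcal{L}_V(r)$ and $K\cap\mathcal{R}_I(r)$ carries no robot; $\mathcal{R}_I(r)$ is not symmetric with respect to $K$; and $r$ lies in the dominant half. Hence it suffices to prove that whenever these guards hold, $H_L^C(r)=H_L^O(r)\cup\mathcal{L}_V(r)$ contains no robot besides $r$ and (at most) the second \texttt{decider} robot; equivalently, $H_L^O(r)$ is empty and $\mathcal{L}_V(r)$ carries only the two \texttt{decider} robots, so in particular no \texttt{off}/\texttt{call} robot lies in $H_L^C(r)$.

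I would prove this through the following chain. First, a \texttt{decider} light can only be present in a configuration reached from a \emph{stable configuration}: by the theorem closing Phase~1 a \texttt{decider} light is first created only there, and in a stable configuration (by definition) the two \texttt{decider} robots share a vertical line that contains no other robot and whose left open half is empty, while every other robot has light \texttt{off}. Second, I would show these facts are maintained until the first \texttt{leader1} light appears: in Phase~2 a robot with light \texttt{off} or \texttt{call} never moves, it only recolours; a \texttt{decider} robot never moves except by ``move right''; and the number of \texttt{decider} robots stays two, since a \texttt{decider} only recolours (to \texttt{off}) after seeing a \texttt{leader1}. Consequently nothing is ever pushed into the open left half of a \texttt{decider}, and nothing other than the two \texttt{decider}s ever occupies their vertical line. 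Finally, at the instant $r$ executes the assignment the ``move right'' steps have kept the two \texttt{decider}s on a common column with everything strictly to its left vacated, so $H_L^O(r)=\emptyset$ and $\mathcal{L}_V(r)$ holds exactly $r$ and the second \texttt{decider}; the lemma follows.

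The step I expect to be the real obstacle is controlling the asynchronous ``move right'' of the two \texttt{decider} robots, together with the fact that a \texttt{leader1} robot can in principle move left. I must check that when one \texttt{decider} has moved right and the other has not, the guard ``there is a robot with light \texttt{decider} on $\mathcal{L}_V(r)$'' fails, so no recolouring to \texttt{leader1} can occur in that transient state, and that whenever both \texttt{decider}s are again on one column the invariant (empty left open half, clean column) is restored — which holds precisely because a \texttt{decider}'s column, having held nothing but the two \texttt{decider}s, is vacated behind it. I also need the auxiliary observation that any \texttt{leader1} robot produced before $r$'s recolouring must have been created on $\mathcal{R}_I$ of the \texttt{decider}s and, having nonempty open left half, cannot move; hence its presence would have forced $r$ to recolour to \texttt{off} rather than \texttt{leader1}, so no such robot could have crept into $H_L^C(r)$ from the right. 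Once these transient-state checks are in place, the guard-reading argument of the first two paragraphs closes the proof.
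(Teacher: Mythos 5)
Your reduction to the guards of Algorithm~\ref{leader selection} is fine, but the invariant you then try to establish --- that $H_L^O(r)$ stays empty and $\mathcal{L}_V(r)$ carries only the two \texttt{decider} robots throughout \textit{Phase 2} --- is false, and the lemma is non-trivial precisely because it fails. It is true that \texttt{off} and \texttt{call} robots never move in \textit{Phase 2}; but the \texttt{decider} robots do move right, and they advance \emph{past occupied vertical lines}: they step right exactly when every robot on $\mathcal{R}_I(r)$ carries light \texttt{call}, transiently sharing that column with the \texttt{call} robots and then overtaking it. Those \texttt{call} robots stay put, so after the first such advance they lie in $H_L^O(r)$; nothing is ``vacated behind'' the deciders. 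Hence you cannot prove the lemma by showing that the left closed half of a \texttt{decider} is always empty --- it is not, and the lemma's content is exactly the conditional ``if it recolours to \texttt{leader1}, then it never made such an advance.''

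What is missing is the link between the deciders' movement history and the symmetry of their current $\mathcal{R}_I$, which is how the paper argues (by contradiction). If $H_L^C(r)$ contains other robots at the moment $r$ still holds \texttt{decider}, then $r$ and $r'$ must have crossed the line that is now $\mathcal{L}_I(r)$, and they cross an occupied line only after all robots there have turned \texttt{call}; a robot turns \texttt{call} only because \emph{its} right-immediate occupied line --- which is precisely $\mathcal{R}_I(r)$ at the time in question, and whose robots never move --- is symmetric with respect to $K$. So a non-empty $H_L^C(r)$ forces $\mathcal{R}_I(r)$ to be symmetric, and the guard you correctly identified (asymmetry of $\mathcal{R}_I(r)$ plus $r$ in the dominant half) cannot fire. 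Your proposal never makes this connection, so the gap is essential rather than a matter of bookkeeping; the transient-state checks in your last paragraph inherit the same false premise that the deciders' column and left half stay clean.
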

\begin{proof}
 Let $r$ and $r'$ be two robots with light \texttt{decider} at some time $T>0$ in \textit{Phase 2} on the same vertical line. Without loss of generality, let us assume that  $r$ be the robot which changes its light at  time $T_1 > T$ to \texttt{leader1} while it was still on the same vertical line at time $T$. Note that by Algorithm \ref{leader selection}, this only happens if $\mathcal{R}_I(r)$ is not symmetric with respect to $K$ and $r$ is on the dominant half at $T_1$. We will denote the vertical line on which a robot $r$ situated at any time $t$ by $L_V^{t}(r)$.
 
 Let us assume that $H_L^C(r)$ has other robots.
 Now note that a robot $r$ with light \texttt{decider} moves right if it sees all the robots on $\mathcal{R}_I(r)$ have light \texttt{call} or sees another robot with light \texttt{decider} on $\mathcal{R}_I(r)$. Also in the beginning of \textit{Phase 2}, $H_L^C(r)$ ($= H_L^C(r')$) has no other robots. So, at the time $T$, $H_L^C(r)$ has other robots and still has light \texttt{decider}, implies $r$ and $r'$ moved right at least once and did not see any robot with light \texttt{leader1} on $\mathcal{R}_I(r)$ till that time. Let us assume that at a time $(0 <) T_2< T$, $r$ and $r'$ were on $L_V^{T_2}(r) $ which is actually $L_I(r)$ at time $T$. Now $r$ and $r'$ moves to $L_V^T(r)$ from $L_V^{T_2}(r)$ only when all robots on $L_V^T(r)$ have light  \texttt{call}. That is only possible if at time $T$, the robots on $R_I(r)$ are symmetric with respect to $K$. Thus it is impossible for $r$ to change its light to \texttt{leader1} while still on the same line $L_V^T(r)$ at time $T_1$. So our assumption that $H_L^C(r)$ has other robots at time $T$, is wrong. Hence the result follows (Figure \ref{Fig:Decider2Leader1}).
 
 \begin{figure}[ht]
     \includegraphics[width=0.6\linewidth]{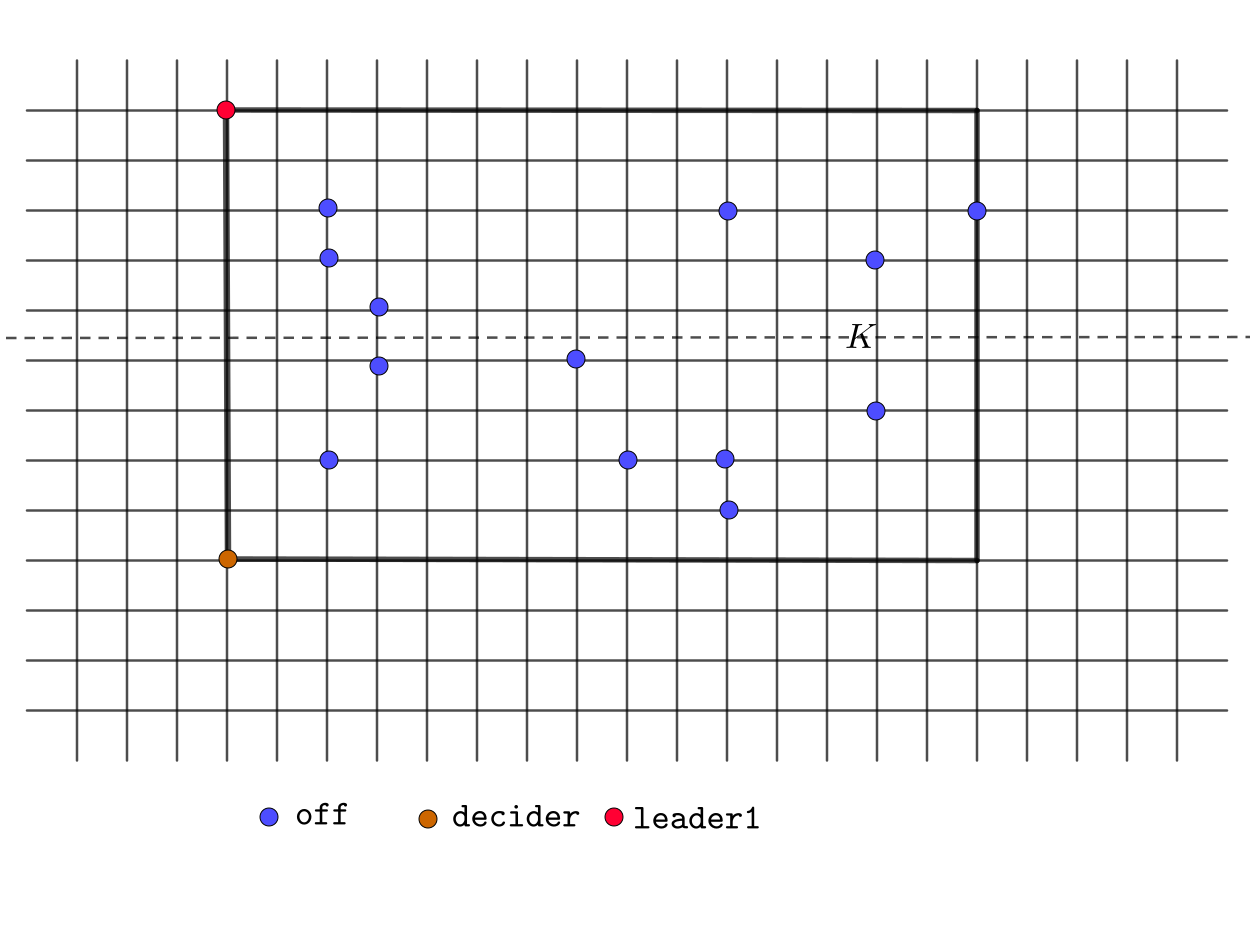}
     \caption{The robot $r$ with light \texttt{decider} changes its light to \texttt{leader1} with $H_L^C(r)$ has no other robots.}\label{Fig:Decider2Leader1}
\end{figure}
\end{proof}

\begin{lemma}
\label{P2:l5}
If after a certain time $T>0$ in \textit{Phase 2}, the configuration $\mathbb{C}(T)$ has two robots $r$ and $r'$ both with light \texttt{leader1}, then $\exists$ $0<T_1,T_2<T$ such that $r$ (or $r'$) changed its light to \texttt{leader1} from \texttt{decider} at $T_1$  and $r'$ (or, $r$) changed its light to \texttt{leader1} from \texttt{off} at $T_2$.
\end{lemma}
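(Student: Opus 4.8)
The plan is to classify the transitions in Algorithm~\ref{leader selection} that set a robot's light to \texttt{leader1}. Reading \textsc{Phase2()}, a robot $r$ can acquire light \texttt{leader1} in exactly two ways: (i) from light \texttt{decider} --- which needs another \texttt{decider} robot on $\mathcal{L}_V(r)$, no robot on $K\cap\mathcal{R}_I(r)$, an asymmetric $\mathcal{R}_I(r)$, and $r$ in the dominant half; or (ii) from light \texttt{off} --- which needs two \texttt{decider} robots on $\mathcal{L}_I(r)$ together with either $r\in K\cap\mathcal{L}_V(r)$, or $r$ being a terminal robot of $\mathcal{L}_V(r)$ in the dominant half (with $\mathcal{R}_I(r)$ asymmetric). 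Since both $r$ and $r'$ carry \texttt{leader1} in $\mathbb{C}(T)$ and neither had it at the start of \textit{Phase~2} (shown below), it suffices to prove: (A) at most one robot ever performs a transition of type~(i); and (B) at most one robot ever performs a transition of type~(ii). Then one of $r,r'$ used (i) at some $0<T_1<T$ and the other used (ii) at some $0<T_2<T$, which is the claim.

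Two preliminary observations set the stage. First, \emph{Phase~2 must have started from a stable configuration}: the only other outcome of \textit{Phase~1} is a configuration with a single \texttt{leader1} robot and all others \texttt{off}, and from such a configuration no transition of type (i) or (ii) is ever enabled (each needs \texttt{decider} robots, of which there are none), so the number of \texttt{leader1} robots would remain one, contradicting the hypothesis; in particular, at the start of \textit{Phase~2} no robot has \texttt{leader1}, so both $r$ and $r'$ acquired it by a transition. Second, \emph{no robot ever acquires light \texttt{decider} during Phase~2}, so at all times the \texttt{decider} robots are a subset of the two robots $d_1,d_2$ of the initial stable configuration, and $d_1,d_2$ only ever move rightward; whenever a type-(i) transition fires, its guard ``a \texttt{decider} robot on $\mathcal{L}_V(r)$'' forces $d_1$ and $d_2$ to be on one vertical line at that instant, hence --- since $K$ bisects $\overline{d_1d_2}$ and that line carries no third robot --- $d_1$ and $d_2$ lie strictly on opposite sides of $K$.

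For (A): the dominant-half clause singles out the one of $d_1,d_2$ on the $\lambda_1$-side of $K$, so only one of them can take the transition; and the moment one of them turns \texttt{leader1}, the other, on its next activation, matches the first guard of the \texttt{decider} case (a \texttt{leader1} robot on $\mathcal{L}_V$ or $\mathcal{R}_I$) and turns \texttt{off}, after which no robot has a \texttt{decider} companion on its vertical line and type~(i) is dead. For (B): the guard ``two \texttt{decider} robots on $\mathcal{L}_I(r)$'' forces the firing robot onto the immediate right occupied line $\ell$ of $d_1,d_2$. Here I would use that in a stable configuration produced by \textit{Phase~1} this line $\ell$ is symmetric with respect to $K$ and carries no robot on $K$ --- precisely the situation in which \textit{Phase~1} terminates in a stable configuration rather than producing a \texttt{leader1} robot --- a property preserved while the \texttt{decider}s march right; and that once any robot on $\ell$ turns \texttt{leader1} the \texttt{decider}s can no longer execute ``move right'' and must turn \texttt{off} on their next activation. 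Tracking the configuration under the asynchronous scheduler one then argues that two robots both performing a type-(ii) transition must do so while lying on one and the same line $\ell$ with $d_1,d_2$ still present and unmoved; but on that fixed line only one grid point lies on $K$, and a line symmetric with respect to $K$ has at most one terminal robot on the dominant side of $K$, so the two type-(ii) guards are mutually exclusive for distinct robots of $\ell$.

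The delicate step is (B): the scheduler may let a robot's \textsc{Look} precede a \texttt{decider}'s move or light change it ``ought'' to have seen, so uniqueness cannot be read off a single snapshot and one must argue over the sequence of activations. I expect the clean route is the invariant that, from a stable configuration, the line immediately right of the \texttt{decider} line stays symmetric with respect to $K$ and free of robots on $K$ while the \texttt{decider}s march right, up to the first line that is asymmetric (where a \texttt{decider} itself turns \texttt{leader1}) or that carries a robot on $K$ (where that robot turns \texttt{leader1}); careful bookkeeping of which robot can still be \texttt{off} at the firing instant then gives (B), and with (A) the lemma follows.
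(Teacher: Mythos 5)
Your decomposition is exactly the paper's: show that at most one robot ever makes the \texttt{decider}$\to$\texttt{leader1} transition and at most one ever makes the \texttt{off}$\to$\texttt{leader1} transition, so two \texttt{leader1} robots in $\mathbb{C}(T)$ must have arisen one from each. Your part (A) is complete and matches the paper's argument (the two \texttt{decider} robots sit on opposite sides of $K$ on their common vertical line, and only the one in the dominant half can fire). The gap is part (B), which is the heart of the lemma and which you leave as a plan rather than a proof: the claims that two type-(ii) firings ``must do so while lying on one and the same line $\ell$ with $d_1,d_2$ still present and unmoved'' and that ``the two type-(ii) guards are mutually exclusive for distinct robots of $\ell$'' are both asserted, not established, and the latter is also misstated --- the symmetry test in the \texttt{off}-branch of Algorithm~\ref{leader selection} is applied to $\mathcal{R}_I(r)$, the occupied line to the \emph{right} of $\ell$, not to $\ell$ itself, so ``a line symmetric with respect to $K$ has at most one terminal robot on the dominant side'' is not the condition the algorithm actually tests (indeed the dominant half is only defined when $\mathcal{R}_I(r)$ is asymmetric).

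The paper closes (B) directly: an \texttt{off} robot turns \texttt{leader1} only upon seeing two \texttt{decider} robots on its $\mathcal{L}_I$, so two such robots $r,r'$ must lie on the same vertical line $\ell$; one then splits on whether $\mathcal{R}_I(r)$ is symmetric with respect to $K$ (only the robot at $K\cap\ell$ can turn \texttt{leader1}, and there is at most one such grid point) or asymmetric (either the robot at $K\cap\ell$ fires, or, if that point is empty, only the terminal robot of $\ell$ in the dominant half fires), concluding in each case that at most one robot of $\ell$ satisfies a \texttt{leader1} guard. Your worry about the asynchronous scheduler letting the two firings happen against different positions of the \texttt{decider} pair is legitimate --- the paper silently assumes the deciders have not advanced between the two firings, which requires observing that a \texttt{decider} seeing a \texttt{leader1} on $\mathcal{R}_I$ turns \texttt{off} rather than moving, and that it cannot advance past $\ell$ unless every robot of $\ell$ has turned \texttt{call} --- but identifying the difficulty is not the same as resolving it. As submitted, the proposal proves (A) but only outlines (B), so the lemma is not yet established.
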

\begin{proof}
 First, we will show that both the robots $r$ and $r'$ can not have changed their lights to \texttt{leader1} from light \texttt{decider} at $T_1$ and $T_2$. From the Algorithm \ref{leader selection}, it is clear that a robot $r$ with light \texttt{decider} changes its light to \texttt{leader1} only if $\mathcal{R}_I(r)$ is not symmetric, $K \cap \mathcal{R}_I(r)$ is empty and $r$ is in dominant half. Now both the robots with light \texttt{decider} can not be on the dominant half. So, both $r$ and $r'$ can not have change their lights to \texttt{leader1} from \texttt{decider} at $T_1$ and $T_2$.
 
 Secondly, it has to be shown that both $r$ and $r'$ can not have changed their lights to \texttt{leader1} from light \texttt{off} at $T_1$ and $T_2$. From Algorithm \ref{leader selection}, it is obvious that a robot $r_0$ with light \texttt{off} can only change its light to \texttt{leader1} if it sees two decider robots $r_{d1}$ and $r_{d2}$ on $\mathcal{L}_I(r_0)$. So if at a time $T'< T_1, T_2$ both $r$ and $r'$ had light \texttt{off}, then $r$ and $r'$ was on the same vertical line $\mathcal{L}_V(r)$ ($=\mathcal{L}_V(r')$). Now we will show by contradiction that it is not possible for both $r$ and $r'$ to change their lights to \texttt{leader1}. Let us assume $\mathcal{R}_I(r)$ ($=R_I(r')$) is symmetric with respect to $K$. Then $r$ or $r'$ will only change their lights to \texttt{leader1} if both are on the grid position $K \cap \mathcal{R}_I(r_{d1})$. Which is not possible. So, let us now assume $\mathcal{R}_I(r)$ is not symmetric with respect to $K$ and also let one of $r$ or, $r'$ is on $K \cap \mathcal{R}_I(r_{d1})$. Without loss of generality, let $r$  is on $K \cap \mathcal{R}_I(r_{d1})$. Then according to the Algorithm \ref{leader selection}, $r$ changes its light to \texttt{leader1} but no other robot on $\mathcal{L}_V(r)$ changes their lights as they are not closest to $K$ or sees no robot with light \texttt{call} on $\mathcal{L}_V(r)$. Hence in this case, there can be only one robot on $\mathcal{L}_V(r)$ who will change its light to \texttt{leader1}, arriving at a contradiction again. So let $\mathcal{R}_I(r)$ is not symmetric and there is no robot on the grid point $K \cap \mathcal{R}_I(r_{d1})$. Then only the terminal robot of $\mathcal{L}_V(r)$ who is in the dominant half changes its color to \texttt{leader1}. So again a contradiction that both $r$ and $r'$ on $\mathcal{L}_V(r)$ change their light to \texttt{leader1}. Hence our assumption was wrong.
 
 \begin{figure}[!htb]\centering
   \begin{minipage}{0.45\textwidth}
     \includegraphics[width=1.15\linewidth]{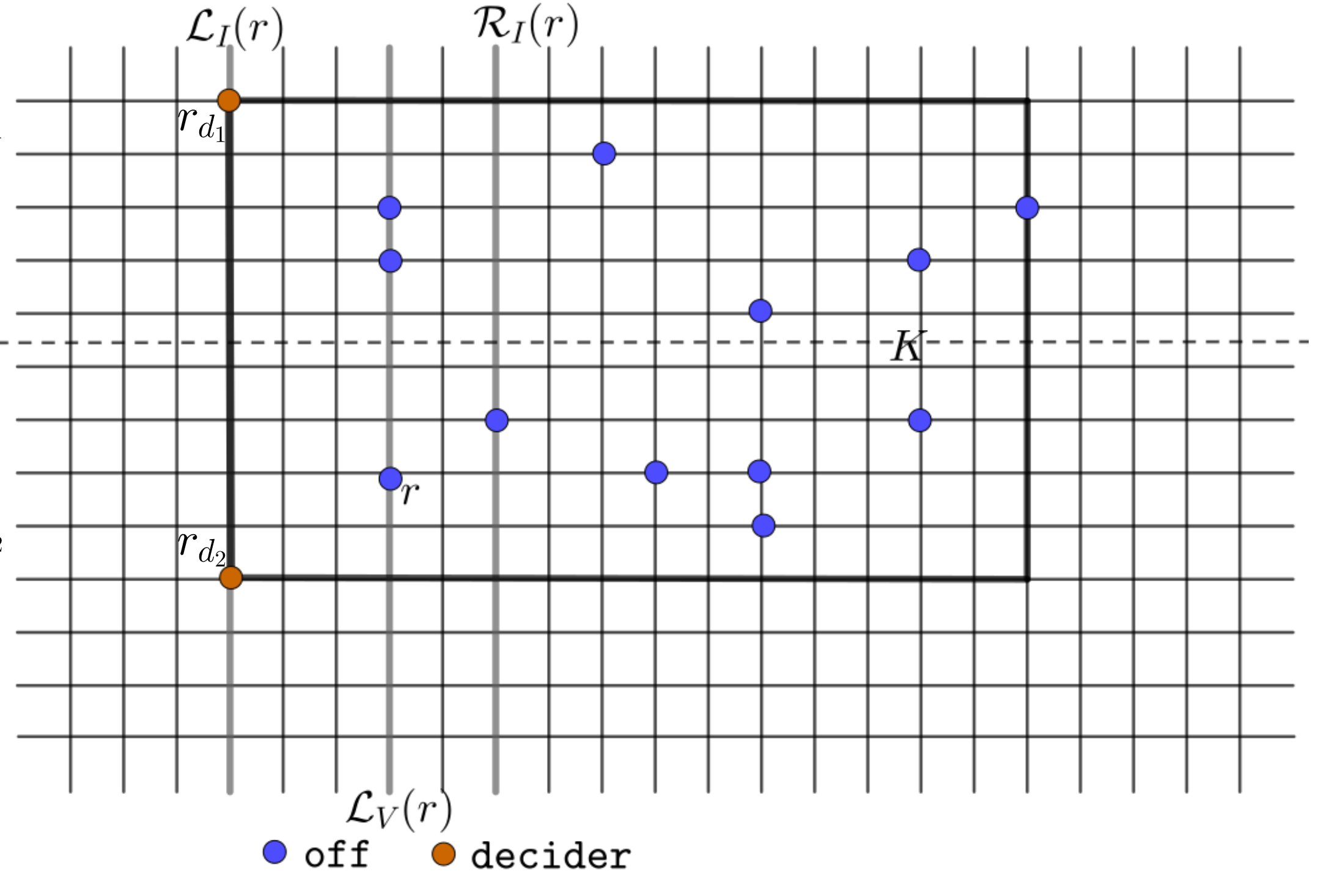}
     \caption{The robot $r_{d1}$ with light \texttt{decider} and $r$ with light \texttt{off} both are on the dominant half.}\label{Fig:terminal2_2Leader1}
   \end{minipage}
   \hspace{1.5mm}
   \begin{minipage}{0.45\textwidth}
    \includegraphics[width=1.15\linewidth]{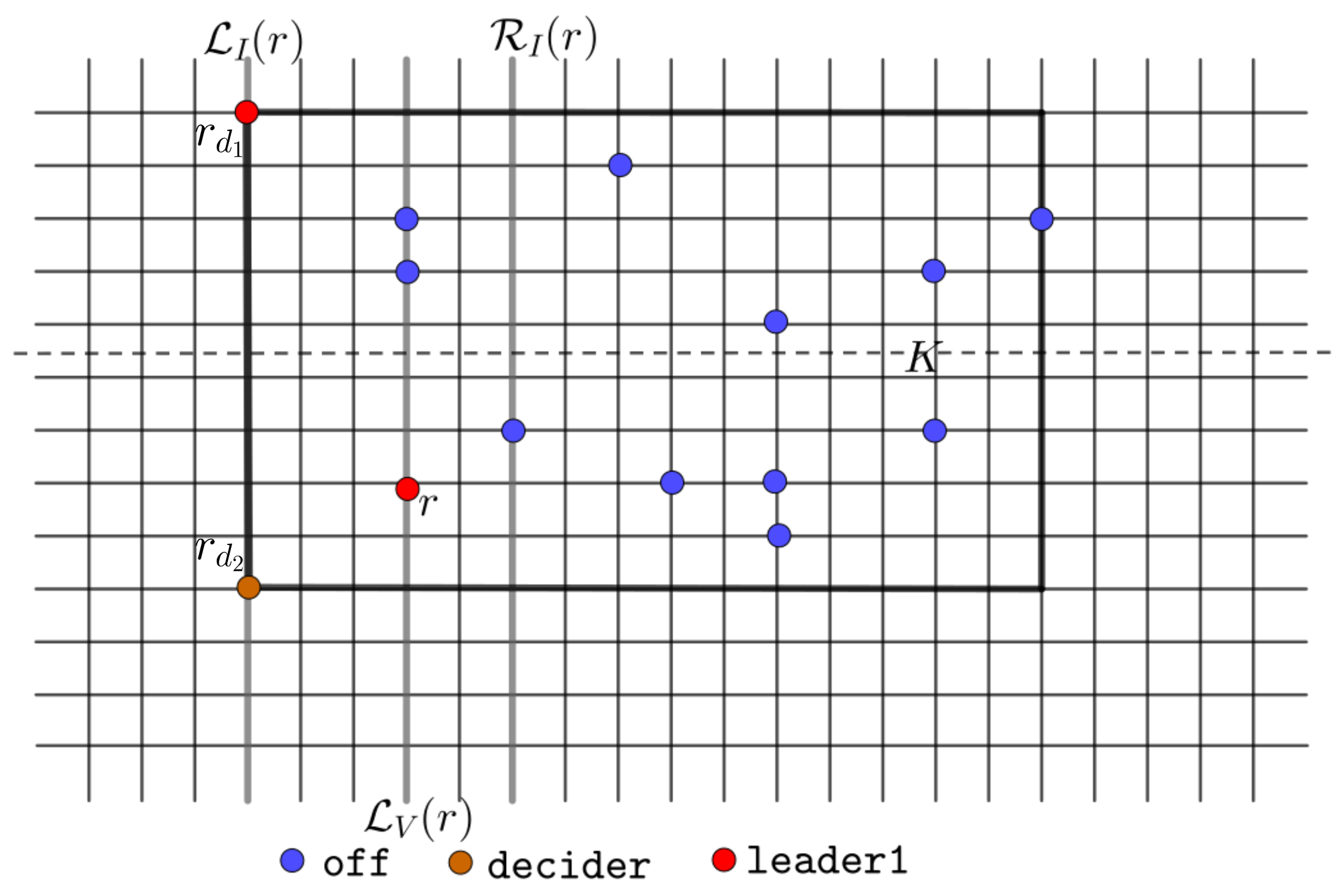}
     \caption{The robot $r_{d1}$ with light \texttt{decider} and $r$ with light \texttt{off} change their lights to \texttt{leader1}.}\label{Fig:2Leader1}
   \end{minipage}
\end{figure}

 Hence one of $r$ or $r'$ will change its light to \texttt{leader1} from light \texttt{decider} at a time $T_1 < T$ and the other robot will change its light to \texttt{leader1} from light \texttt{off} at time $T_2 < T$ (Figure \ref{Fig:terminal2_2Leader1}, \ref{Fig:2Leader1}).

\end{proof}

\begin{lemma}
\label{P2:l6}
 If in \textit{Phase 2}, there exists a configuration $\mathbb{C}(T)$ at a time $T>0$ such that $\exists$ a robot $r'$ with light \texttt{leader1} which sees another robot $r$ with light \texttt{leader1} on $L_I(r')$ in $\mathbb{C}$, then $r'$ will not move and change its light to \texttt{off}.
\end{lemma}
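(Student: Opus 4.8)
The plan is to prove the statement by a direct inspection of the branch of Algorithm~\ref{leader selection} that governs a robot carrying light \texttt{leader1}. The key observation I would establish is that, under the stated hypothesis, the guard of the \emph{first} sub-branch of that case — the sub-branch that contains every movement instruction and the promotion $r'.\texttt{light}\leftarrow\texttt{leader}$ — evaluates to false in $\mathbb{C}(T)$, so the only action left available to $r'$ in this \textsc{Compute} phase is the final \texttt{ElseIf}, which resets its light to \texttt{off} and prescribes no movement.

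First I would unfold that guard: it reads ``(all robots in $\mathcal{L}_I(r')$ are \texttt{off}) or ($H_L^O(r')$ is empty and all robots in $\mathcal{R}_I(r')$ are \texttt{off})''. By hypothesis the robot $r$ lies on $\mathcal{L}_I(r')$ and has light \texttt{leader1}, so the first disjunct fails at once. For the second disjunct, recall that $\mathcal{L}_I(r')$ is, by its definition in the notation table, a vertical line lying strictly to the left of $\mathcal{L}_V(r')$ and carrying at least one robot, namely $r$; hence $r\in H_L^O(r')$, so the left open half of $r'$ is non-empty. Therefore the second disjunct fails as well, irrespective of the lights on $\mathcal{R}_I(r')$, and the whole guard is false.

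Consequently, in this activation $r'$ does not enter the first sub-branch, so it executes none of the instructions ``move left'', ``move vertically opposite to the companion robot on its vertical line'', or ``move vertically along its positive $y$-axis'' found there; in particular $r'$ makes a null move. What is left is precisely the \texttt{ElseIf} clause ``there is a robot with light \texttt{leader1} in $\mathcal{L}_I(r)$'', which holds here, witnessed by the robot $r$ that $r'$ sees by hypothesis; hence $r'$ carries out $r'.\texttt{light}\leftarrow\texttt{off}$. Since the outer case split of Algorithm~\ref{leader selection} is on the light of the acting robot, no other branch applies to a robot with light \texttt{leader1}, which completes the argument.

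The proof is essentially a guard evaluation, so I do not anticipate a genuine obstacle; the only points that need a line of care are (i) matching the hypothesis ``$r'$ sees a robot with light \texttt{leader1} on $L_I(r')$'' with the exact predicate tested in the last \texttt{ElseIf}, so that $r'$ indeed recognizes that condition, and (ii) noting that the claim concerns a single activation of $r'$ on the fixed configuration $\mathbb{C}(T)$, so asynchrony is irrelevant here — whatever other robots may do between $T$ and the actual \textsc{Move} of $r'$ does not affect the output just computed.
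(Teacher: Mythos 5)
Your argument is correct, but it takes a genuinely different and more elementary route than the paper's. You prove the implication directly by evaluating the guards of the \texttt{leader1} branch of Algorithm~\ref{leader selection} under the lemma's hypothesis: the presence of $r$ with light \texttt{leader1} on $\mathcal{L}_I(r')$ simultaneously falsifies the first disjunct (not all robots on $\mathcal{L}_I(r')$ are \texttt{off}) and the second (since $\mathcal{L}_I(r')$ lies strictly to the left, $H_L^O(r')$ is non-empty), so the movement sub-branch is skipped and the final \texttt{ElseIf} fires. The paper instead reaches the same conclusion by first invoking Lemma~\ref{P2:l5} to pin down that $r$ must have come from \texttt{decider} and $r'$ from \texttt{off}, and Lemma~\ref{p2:l4} to show $H_L^C(r)$ is empty, thereby deducing that the two \texttt{leader1} robots sit on consecutive occupied vertical lines before appealing to the algorithm. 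That extra structure is not needed for the literal statement (whose hypothesis already places $r$ on $\mathcal{L}_I(r')$), but it serves the surrounding development by characterizing the only way two \texttt{leader1} robots can coexist and by identifying which of the two is the one that demotes itself. Your version buys a shorter, self-contained proof of exactly what is claimed; the paper's buys context that the later uniqueness argument relies on. Your two cautionary remarks — that the hypothesis matches the predicate tested in the \texttt{ElseIf}, and that the Compute output is determined by the snapshot so asynchrony cannot interfere with this single activation — are both apt and correctly resolved.
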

\begin{proof}
Since at time $T$, there are two robots with light \texttt{leader1} in the configuration during \textit{Phase 2}, robot $r$ must have changed its light to \texttt{leader1} from light \texttt{decider} and the other robot $r'$ must have changed the light to \texttt{leader1} from \texttt{off} (by Lemma \ref{P2:l5}). Now from Lemma \ref{p2:l4}, $H_L^C(r)$ has no other robots. Also $r'$ is on $\mathcal{R}_I(r)$ (as $r'$ can only change its light to \texttt{leader1} if it has seen two robots with light \texttt{decider} on $L_I(r')$). Thus it can be seen that if the configuration has two robots with light \texttt{leader1} at any time $T$, the two robots will be on two consecutive occupied vertical line. 
So by Algorithm \ref{leader selection}, $r'$ will see $r$ on $L_I(r')$ with light \texttt{leader1} and change its light to \texttt{off} without moving (Figure \ref{Fig:2Leader1_1}, \ref{Fig:2Leader1to1Leader1}).

\begin{figure}[!htb]\centering
   \begin{minipage}{0.45\textwidth}
     \includegraphics[width=1.15\linewidth]{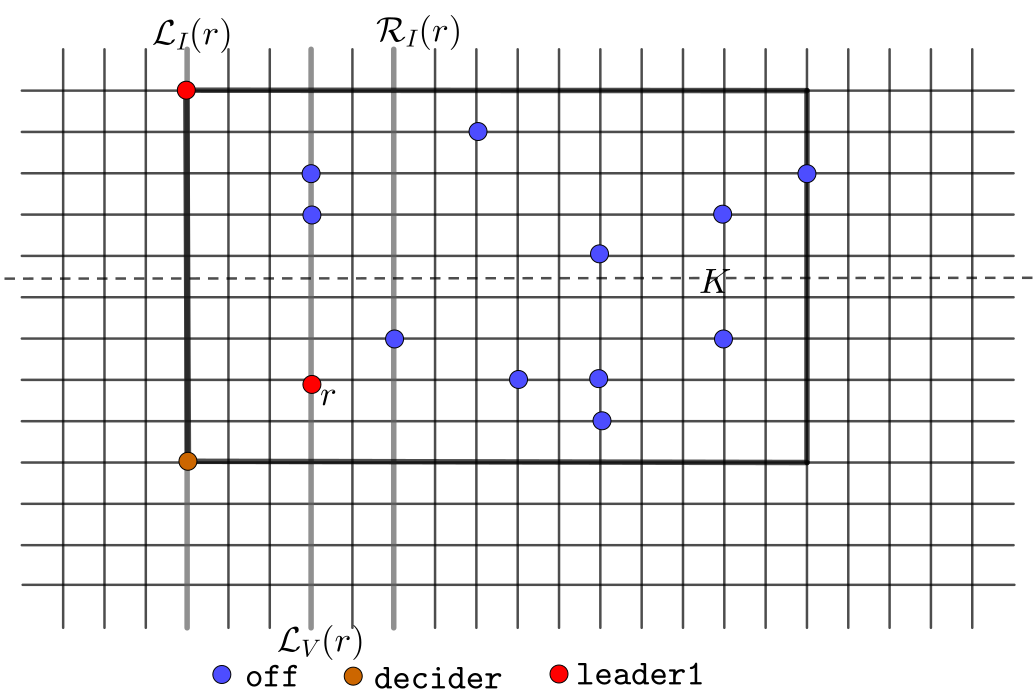}
     \caption{There are two robots with light \texttt{leader1} in the configuration.}\label{Fig:2Leader1_1}
   \end{minipage}
   \hspace{1.5mm}
   \begin{minipage}{0.45\textwidth}
    \includegraphics[width=1.15\linewidth]{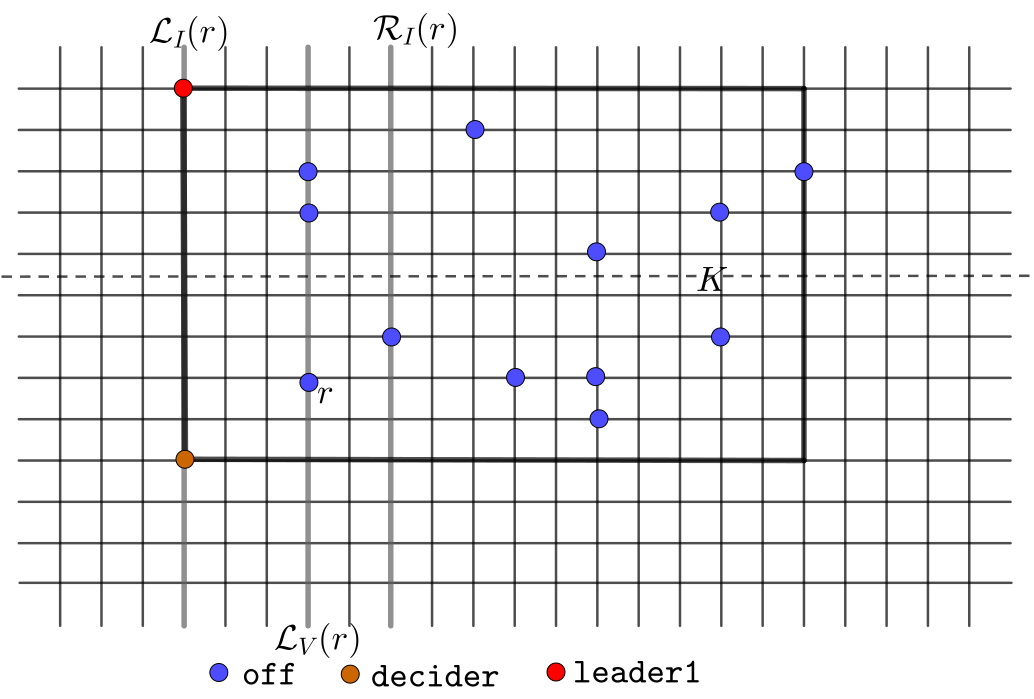}
     \caption{The robot $r$ changes its light to \texttt{off} after seeing another \texttt{leader1} robot on $L_I(r)$.}\label{Fig:2Leader1to1Leader1}
   \end{minipage}
\end{figure}
\end{proof}

\begin{lemma}
\label{P2:l7}
If a robot $r$ with light \texttt{leader1} is not terminal on $\mathcal{L}_V(r)$ and does not see any robot with light \texttt{leader1} on $\mathcal{L}_I(r)$, then $H_L^O(r) \cap \mathcal{L}_H(r)$ does not contain any robot.
\end{lemma}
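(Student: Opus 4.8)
The plan is to extract an invariant of the \texttt{leader1} state and then read the lemma off it. The invariant is: at every instant at which $r$ has light \texttt{leader1}, \emph{either} $r$ is terminal on $\mathcal{L}_V(r)$, \emph{or} $r$ sees a robot with light \texttt{leader1} on $\mathcal{L}_I(r)$, \emph{or} $H_L^O(r)\cap\mathcal{L}_H(r)$ contains no robot. Granting this, the lemma is immediate: its hypotheses falsify the first two disjuncts (and if $\mathcal{L}_I(r)$ does not even exist then $H_L^O(r)=\varnothing$ and there is nothing to prove). Before establishing the invariant I would record three facts read off Algorithm \ref{leader selection}: (a) a robot with light \texttt{leader1} never moves to the right; (b) its vertical moves are performed only from a position where it is terminal on its vertical line, and in the direction away from the remaining robots of that line, so it stays terminal on $\mathcal{L}_V(r)$ afterwards; (c) robots with light \texttt{off} or \texttt{call} never move during Phase 2. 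I would also dispose up front of the situation where a second robot has light \texttt{leader1}: by Lemmas \ref{P2:l5} and \ref{P2:l6} the two leaders then lie on consecutive occupied vertical lines with the one born from \texttt{decider} on the left; since $r$ sees no \texttt{leader1} on $\mathcal{L}_I(r)$ it must be that left robot, and Lemma \ref{p2:l4} says $H_L^O(r)\subseteq H_L^C(r)$ has no other robot, so the third disjunct holds. Hence I may assume $r$ is the unique \texttt{leader1} robot, and then by (c) all other robots are frozen.

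For the base case (the invariant at the moment $r$ acquires light \texttt{leader1}) I would go through the six rules of Algorithms \ref{Algo_Phase1} and \ref{leader selection} that set light \texttt{leader1}. Four of them --- a \texttt{decider} robot, a singleton \texttt{terminal1} robot, a \texttt{terminal1} robot of the dominant half, and an \texttt{off} robot of the dominant half --- either are guarded by ``$r$ is terminal on $\mathcal{L}_V(r)$'' or leave $r$ on a vertical line that contains only $r$ and possibly one partner robot (for the \texttt{decider} case via Lemma \ref{p2:l4} together with the fact that a \texttt{decider}'s vertical line carries no other robot in a stable configuration), so $r$ is terminal and the first disjunct holds. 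The two remaining rules create a \texttt{leader1} robot lying on the line $K$: the Phase 1 rule in which a robot on $K\cap\mathcal{R}_I$ turns \texttt{leader1}, where $\mathcal{L}_I(r)$ carries exactly the two \texttt{terminal1} robots, which are symmetric about $K$ hence off $K$, and no occupied line is farther left; and the Phase 2 rule in which a robot on $K\cap\mathcal{L}_V(r)$ turns \texttt{leader1}, where $\mathcal{L}_I(r)$ carries the two \texttt{decider} robots. In the first, $H_L^O(r)\cap K=\varnothing$ directly. In the second, I would trace Phase 2 backwards: the two \texttt{decider} robots arrived at their current vertical line by a chain of rightward steps, and a \texttt{decider} robot is allowed to step right past a line only when the grid point $K\cap\mathcal{R}_I$ on it is empty --- and by (c) that point can never become occupied later --- so no occupied line to the left of $r$ meets $K$, i.e.\ $H_L^O(r)\cap\mathcal{L}_H(r)=\varnothing$.

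For the preservation step I would check each move of a \texttt{leader1} robot. A vertical move keeps $r$ terminal on $\mathcal{L}_V(r)$ by (b). A leftward move keeps $\mathcal{L}_H(r)$ fixed and, by (c), disturbs no other robot. If it is the leftward move guarded by ``there is no other robot in $H_U^C(r)$ or $H_B^C(r)$'', then, as $\mathcal{L}_H(r)\subseteq H_U^C(r)\cap H_B^C(r)$, the line $\mathcal{L}_H(r)$ was robot-free apart from $r$ before the move and hence after it, so the third disjunct holds. If it is the leftward move of the ``$r$ is not terminal on $\mathcal{L}_V(r)$'' branch, its guard forces ``all robots in $\mathcal{L}_I(r)$ are \texttt{off}'' or ``$H_L^O(r)$ is empty''; in the second sub-case $H_L^O(r)$ stays empty afterwards, and in the first the pre-move state of $r$ already satisfied the invariant with its first disjunct false ($r$ not terminal) and its second false (no \texttt{leader1} on $\mathcal{L}_I(r)$), so its third held, meaning $\mathcal{L}_H(r)$ carried no robot to the left of $r$; since $r$ only slides farther left along the unchanged $\mathcal{L}_H(r)$, this is still true. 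That completes the invariant, and the lemma follows.

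The step I expect to be the real obstacle is the last part of the base case: arguing carefully about the rightward march of the \texttt{decider} robots in Phase 2 and combining it with the immobility of \texttt{off} and \texttt{call} robots to guarantee that no robot can ever surface on $K$ to the left of $r$. Observation (b) is the other load-bearing point, since without it a \texttt{leader1} robot could leave its horizontal line by a vertical move while already non-terminal, and neither the invariant nor the induction would survive.
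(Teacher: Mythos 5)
Your proof is correct, and its load-bearing step is the same as the paper's: for a non-terminal \texttt{leader1} robot created on $K\cap\mathcal{R}_I(r_{d_1})$, the rightward march of the \texttt{decider} pair certifies that no occupied vertical line to its left meets $K$. The paper states this as a contradiction (a hypothetical robot $r'$ on $K\cap H_L^O(r)$ would itself have turned \texttt{leader1} when the deciders reached $\mathcal{L}_I(r')$, turning them \texttt{off} and preventing $r$ from ever becoming \texttt{leader1}), whereas you state it directly (a decider advances past a line only when that line's $K$-point is empty, and \texttt{off}/\texttt{call} robots never move, so the emptiness persists); these are the same argument read in opposite directions. Where you genuinely differ is in the decomposition: the paper argues only about the moment of creation and only about the one Phase-2 rule, implicitly taking for granted that this is the sole way a non-terminal \texttt{leader1} can arise and that such a robot has not moved since; you instead prove the disjunctive invariant over the whole execution, checking all six \texttt{leader1}-creation rules in the base case and all subsequent moves of a \texttt{leader1} robot in the preservation step (your observation that vertical moves preserve terminality, and that the horizontal half-planes are unchanged by leftward moves, is exactly what is needed there). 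Your version is therefore more complete, at the cost of length; the paper's is shorter but leaves those closure properties unargued. One small inaccuracy that does not affect the conclusion: in your disposal of the two-\texttt{leader1} situation, the left leader (born from \texttt{decider}) sits on a vertical line containing at most itself and its former partner, so it is terminal; under the lemma's hypotheses that case is simply vacuous rather than settled by Lemma \ref{p2:l4}.
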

\begin{proof}
 Let $r$ is not terminal on $\mathcal{L}_V(r)$ with light \texttt{leader1}. Then by Algorithm \ref{leader selection}, $r$ is on $K \cap R_I(r_{d_1})$ $(\mathcal{R}_I(r_{d1})=\mathcal{R}_I(r_{d2}))$, where $r_{d_1}$ and $r_{d_2}$ be two robots with light \texttt{decider}. If possible let, there is a robot $r'$ on  $H_L^O(r) \cap \mathcal{L}_H(r)$. Then $\exists$ $T > 0$ such that $r_{d_1}$ and $r_{d_2}$ are on $L_I(r')$. Now observe that at time $T$, $r'$ is on $K \cap R_I(r_{d_1})$, which implies $r'$ will change its light to \texttt{leader1}.
Then both $r_{d_1}$ and $r_{d_2}$ change their lights to \texttt{off} seeing $r'$ on $R_I(r_{d_1})$ with light \texttt{leader1}. Hence $r_{d_1}$ and $r_{d_2}$ never reach $\mathcal{L}_I(r)$. Thus $r$ can never change its light to \texttt{leader1}.

\begin{figure}[ht]
     \includegraphics[width=0.6\linewidth]{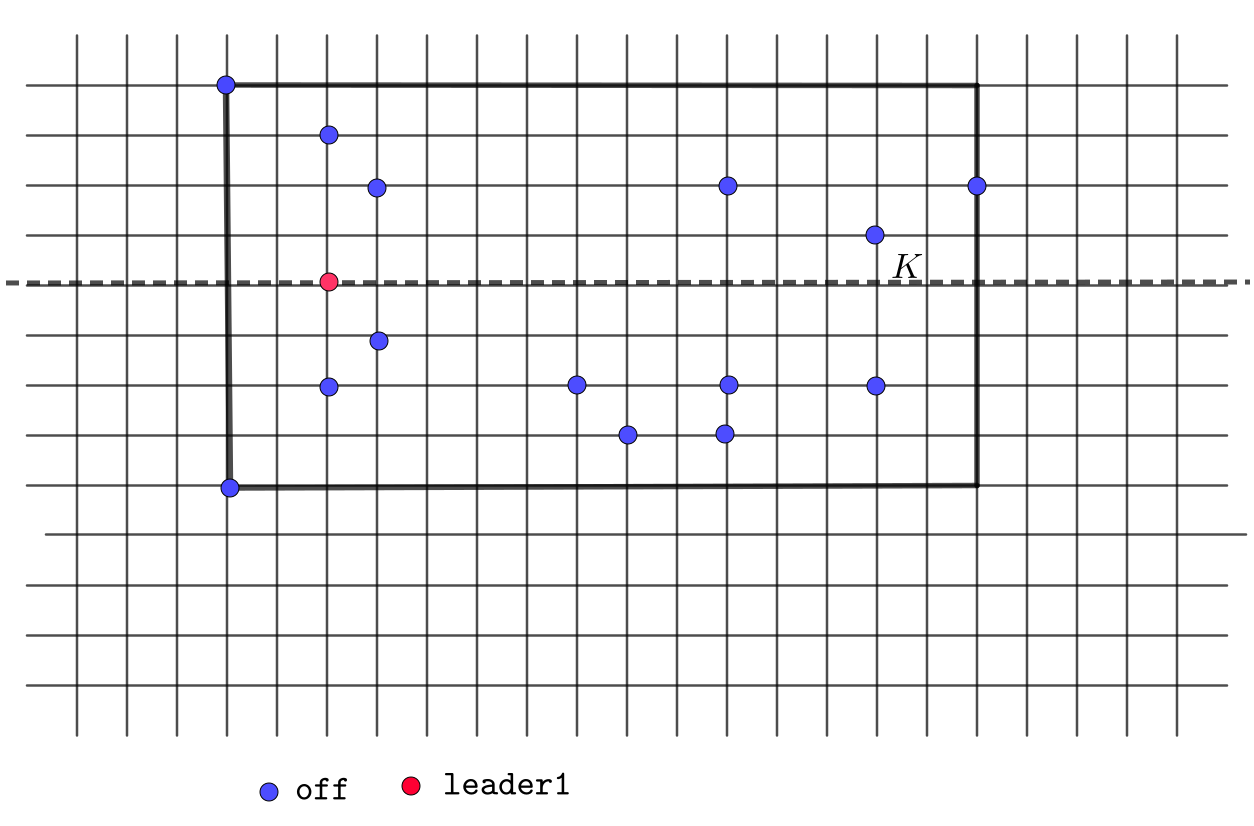}
     \caption{The robot with light \texttt{leader1} has its horizontal line empty on the left.}\label{Fig:empty left}
\end{figure}

So we can conclude that our primary assumption was wrong. Hence the result follows (Figure \ref{Fig:empty left}).
\end{proof}

\subsection{Pattern formation from leader configuration}
\label{ss2}
In this phase, initially the configuration is a leader configuration.  Note that the robots can agree on a global co-ordinate system based on the position of the robot $r_0$ with light \texttt{leader}. We denote the position of $r_0$ with the co-ordinate $(0,-1)$. Also all robots with light \texttt{off} lie on one of the open half planes delimited by the horizontal line $\mathcal{L}_H(r_0)$. This half plane will correspond to positive direction of $Y$-axis (Figure \ref{Fig:leaderConfig}). 

  \begin{algorithm}[H]
    \setstretch{.1}
    \SetKwInOut{Input}{Input}
    \SetKwInOut{Output}{Output}
    \SetKwProg{Fn}{Function}{}{}
    \SetKwProg{Pr}{Procedure}{}{}

  \Input{The configuration of robots visible to me.}
    
    \Pr{\textsc{PatternFormationFromLeaderConfiguration()}}{

    $r \leftarrow$ myself
    
    $r_0 \leftarrow$ the robot with light \texttt{leader}
    
    \uIf{$r.light =$ \texttt{off}}{
    
      \uIf{$($$r_0 \in {H}_B^O(r)$$)$  and $($$r$ is leftmost on $\mathcal{L}_H(r)$$)$ and $($there is no robot in ${H}_{B}^{O}(r) \cap {H}_{U}^{O}(r_0)$$)$ \label{code s2: 0}}{
    
	  \uIf{there are no robots on $\mathcal{L}_H(r_0)$ other than $r_0$ \label{code s2: 1}}{
	  
	    \uIf{there is a robot with light \texttt{done}\label{code s2: 2}}{
	    
	    \uIf{$r$ is at $t_{n-2}$}{$r.light \leftarrow$ \texttt{done}}
	    \Else{\textsc{TargetMove}$(n-2)$}
	    
	    }
	    \Else{\textsc{LineMove}$(1)$}
	  
	  }
	  \uElseIf{there are $i$ robots on $\mathcal{L}_H(r_0)$ other than $r_0$ at $(1,-1), \ldots, (i,-1)$\label{code s2: 3}}{\textsc{LineMove}$(i+1)$}
	  \ElseIf{there are $i$ robots on $\mathcal{L}_H(r_0)$ other than $r_0$ at $(n-i,-1), \ldots, (n-1,-1)$\label{code s2: 6}}{
	  
	    \uIf{$r$ is at $t_{n-i-2}$}{$r.light \leftarrow$ \texttt{done}}
	    \Else{\textsc{TargetMove}$(n-i-2)$}
	  
	  }
    
      }
      
      \ElseIf{$r_0 \in \mathcal{L}_H(r)$ and ${H}_U^O(r)$ has no robots with light \texttt{off}\label{code s2: 4}}{
      
	\If{$r$ is at $(i,-1)$}{Move to $(i,0)$}
      
      }

    }

    \ElseIf{$r.light =$ \texttt{leader}}{
    
      \If{there are no robots with light \texttt{off} \label{code s2: 5}}{
    
    \uIf{$r$ is at $t_{n-1}$}{$r.light \leftarrow$ \texttt{done}}
	    \Else{\textsc{TargetMove}$(n-1)$}}
    
      }
    }
  \Pr{\textsc{LineMove}$(j)$}{
  \uIf{$r$ is on $L_{H1}$}
             {\uIf{$r$ is at $(j,0)$}{Move to $(j,-1)$}
              \Else{Move horizontally towards $(j,0)$}
             }
  \Else{Move vertically towards $L_{H1}$}
  
  }
  
  \Pr{\textsc{TargetMove}$(j)$}{
  \uIf{$r$ is on $L_{t_j-1}$}
             {\uIf{$r$ is at $(t_j(x),t_j(y)-1)$}{Move to $(t_j(x),t_j(y))$}
              \Else{Move horizontally towards $(t_j(x),t_j(y)-1)$}
             }
  \Else{Move vertically towards $L_{t_j-1}$}
  
  }

\caption{Pattern Formation from Leader Configuration}
    \label{algo_phase3} 
\end{algorithm}

Therefore an agreement on a global co ordinate system can happen between the robots who see $r_0$ at $(0,-1)$. After  completion of this phase, the robots achieve the target configuration (Figure \ref{Fig:TargetEmbedding}). The robots first form a compact line and from that line the robots then move to their designated target positions. The difficulty of this phase is to differentiate between two configurations where a robot is going to form a compact line and where the robot is going to its target position which are described in  subsections 3.2.1 and 3.2.2.
\subsubsection{Compact line formation} 
Observe that according to the Algorithm \ref{algo_phase3}, a robot with light \texttt{leader} will not move during the formation of line. Also at the beginning of \textit{Phase 3}, there are no other robots on the line $\mathcal{L}_H(r_0)$. A robot $r$ with light \texttt{off} will first check if it can see $r_0$ and if it is the leftmost robot on the line $\mathcal{L}_H(r)$ and also if there are any robots on $H_B^O(r) \cap H_O^U(r_0)$. If all the conditions are true, (i.e $r$ is the leftmost robot in its horizontal line and there are no other robots in between horizontal lines of $r$ and $r_0$) $r$ counts the number of robots on $\mathcal{L}_H(r_0)$. Lets assume if there are no robots on $\mathcal{L}_H(r_0)$ except $r_0$. In this case, $r$ checks for other robots with light \texttt{done} on the grid. Note that a robot changes its light to \texttt{done} only if it has reached its target position. So, clearly while forming the line, $r$ finds that there are no robots with light \texttt{done} on the grid and  moves to the position $(1,-1)$ following the procedure \textsc{Linemove(1)}. On the other hand if there are  i robots except $r_0$ who are already in the line   $\mathcal{L}_H(r_0)$ occupying the positions $(1,-1), (2,-1)...(i,-1)$,  then $r$ simply move to $(i+1,-1)$ following the procedure \textsc{Linemove($i+1$)}.

  During the procedure \textsc{Linemove($j$)}, a robot (say, $r$) can recognize if it is on the line $L_{H1}$, where $L_{H1}$ is the immediate horizontal line above $\mathcal{L}_H(r_0)$. If it is not on $L_{H1}$, it moves vertically downwards until it reaches $L_{H1}$. Otherwise, if $r$ is already on $L_{H1}$, it checks if it is on the co-ordinate $(j,0)$. If it is not on $(j,0)$, it moves horizontally to $(j,0)$. Observe that during this horizontal movement, there will be no collision as $r$ will be the only robot on $L_{H1}$ due to the fact that the robots in Algorithm \ref{algo_phase3} move sequentially. Now, if $r$ is on the co-ordinate $(j,0)$, it moves vertically to the co-ordinate $(j,-1)$ (Figure \ref{Fig:LineFormation}, \ref{Fig:LineFormed}).
  Note that since the robots move sequentially, no robot will change there light to \texttt{done} before forming the compact line.
\subsubsection{Target Pattern Formation}
After formation of the compact line, the robots now will move to the target co-ordinates. Note that the target co-ordinates are unique for each robot who can see the robot $r_0$ with light \texttt{leader} as there is a agreement on global co-ordinate. The target co-ordinates are denoted as $t_i$, where $i \in \{0,1,2,...\overline{n-1}\}$. Also observe that if $t_i$ and $t_j$ are in the same horizontal line where $i < j $, then $t_i$ is on the right of $t_j$. And if $t_i$ and $t_j$ are not on the same horizontal line, then $t_i$ will be above of $t_j$. 

If a robot sees  $r_0$ on the same horizontal line, it vertically moves to $L_{H1}$ with y-coordinate 0. Now observe that after moving to $L_{H1}$ with y-coordinate 0, it can see all the robots on the line including the robot $r_0$. 

\begin{figure}[!htb]\centering
   \begin{minipage}{0.45\textwidth}
     \includegraphics[width=1\linewidth]{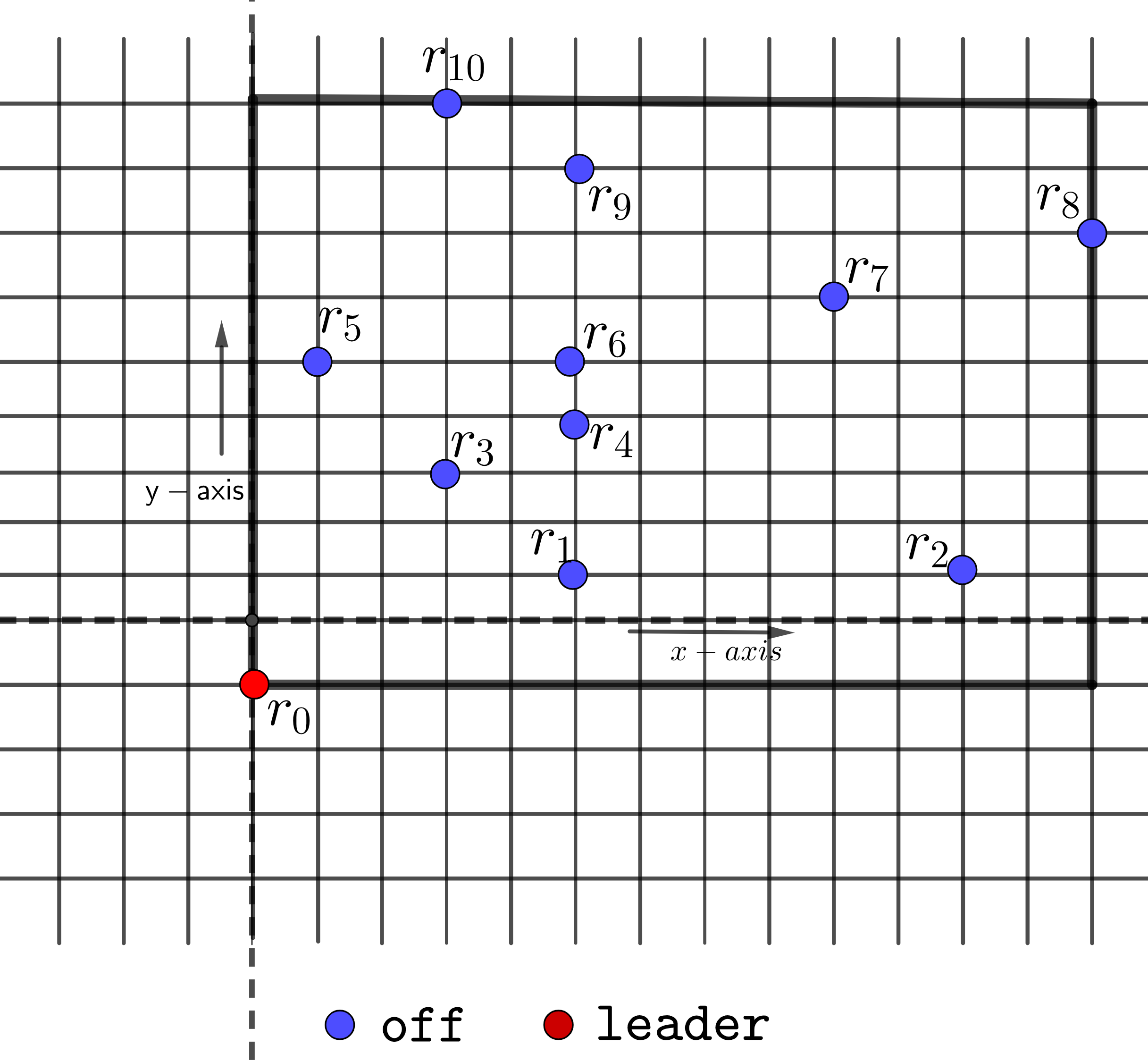}
     \caption{A leader configuration where robot $r_0$ is the robot with light \texttt{leader} at $(0,-1)$ in the agreed coordinate system. }\label{Fig:leaderConfig}
   \end{minipage}
   \hfill
   \begin {minipage}{0.45\textwidth}
    \includegraphics[width=1\linewidth]{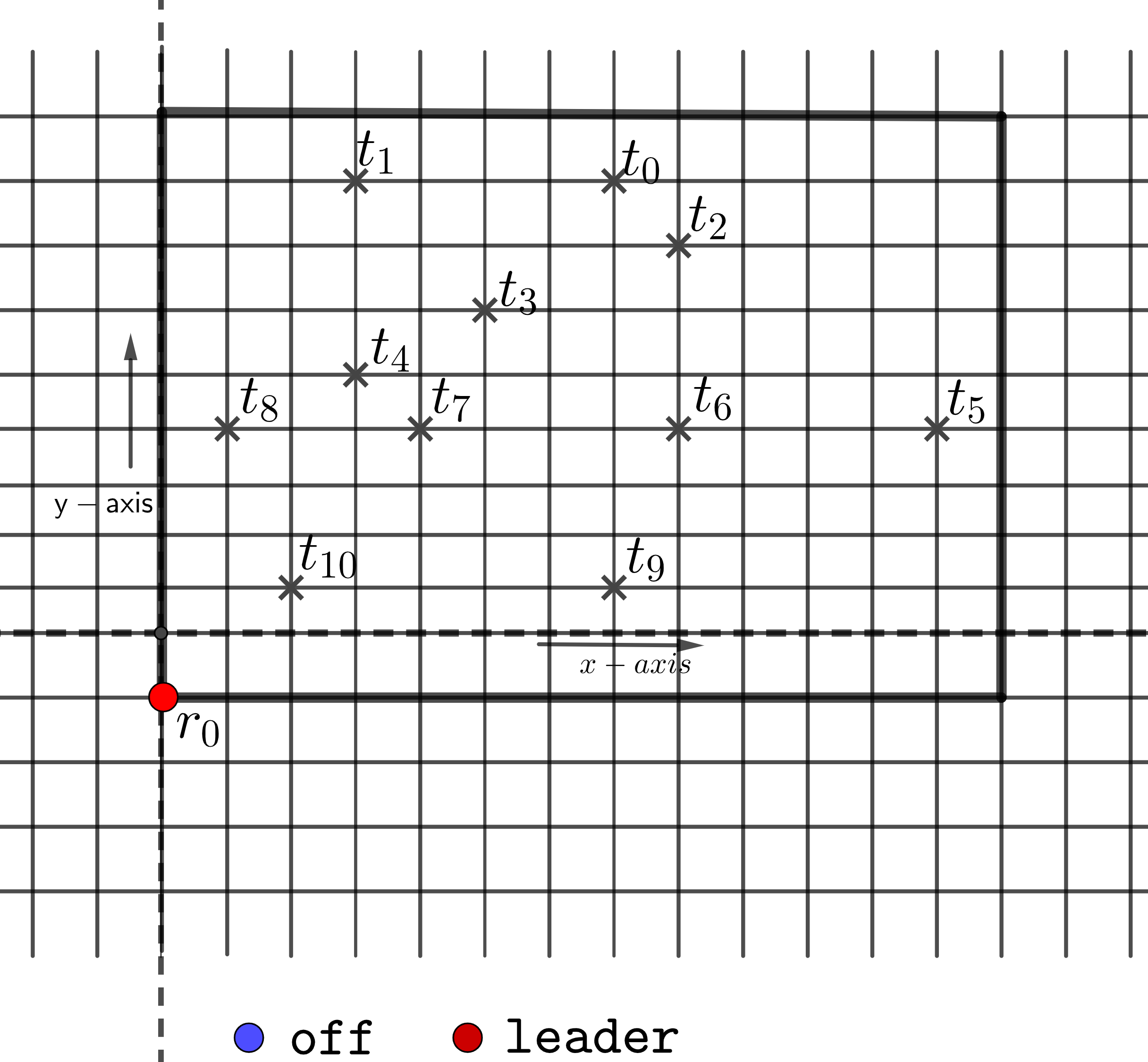}
     \caption{The pattern embedded in the coordinate system.}\label{Fig:TargetEmbedding}
   \end{minipage}
\end{figure}

\begin{figure}[!htb]\centering
   \begin{minipage}{0.45\textwidth}
     \includegraphics[width=1\linewidth]{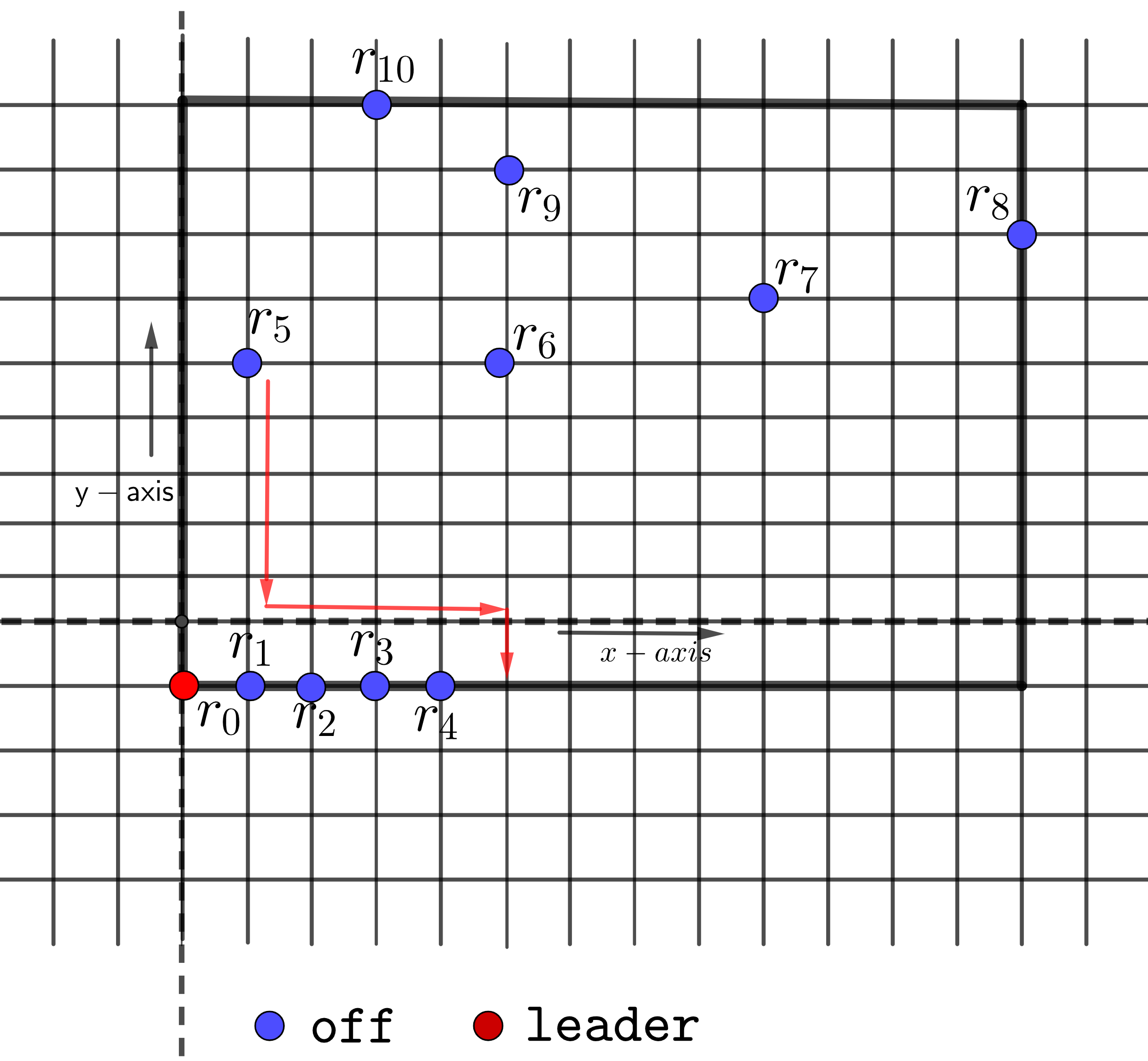}
     \caption{Movement of $r_5$ to $\mathcal{L}_H(r_0)$.}\label{Fig:LineFormation}
   \end{minipage}
   \hfill
   \begin {minipage}{0.45\textwidth}
    \includegraphics[width=1\linewidth]{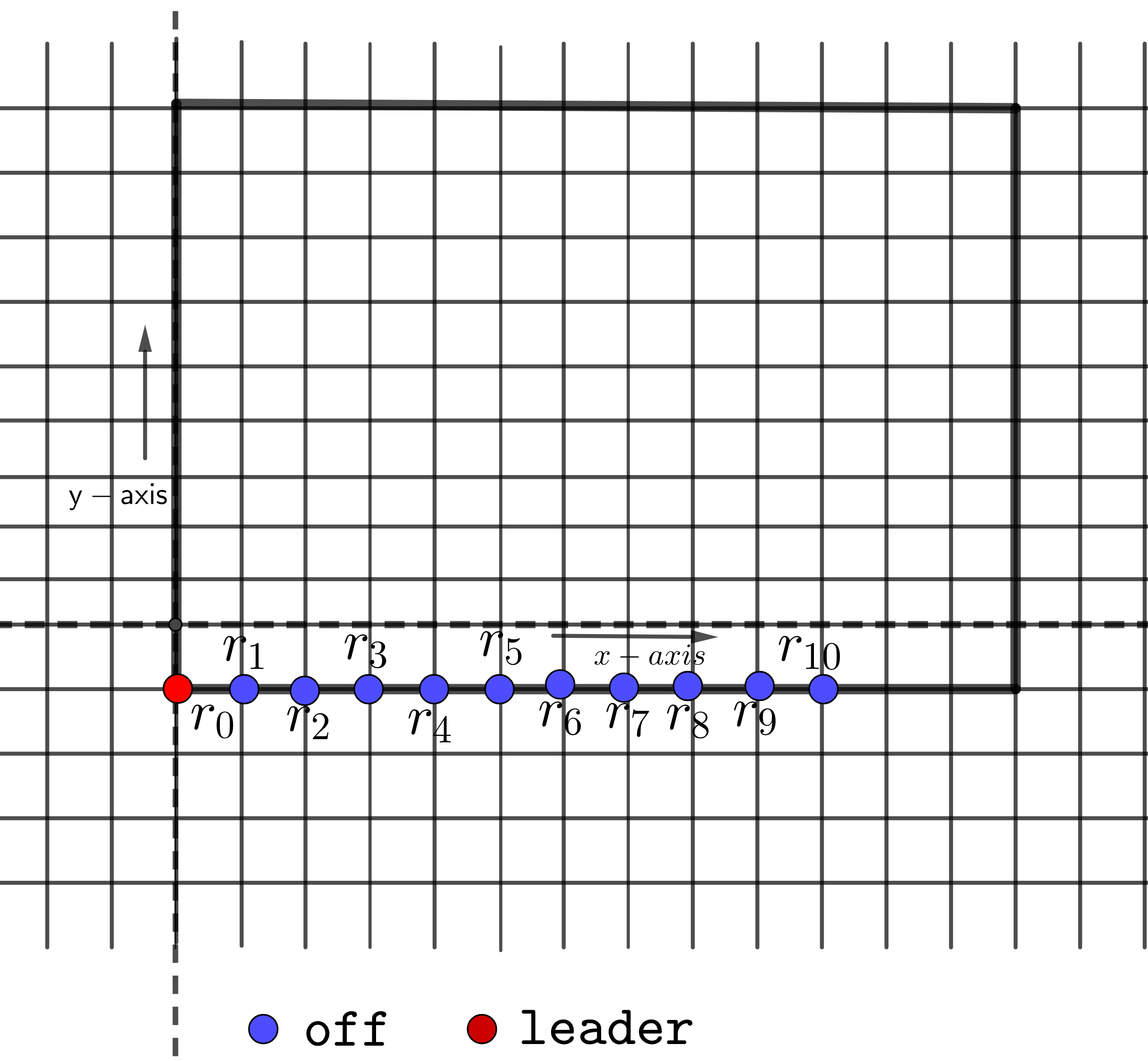}
     \caption{All robots forms an compact line on $\mathcal{L}_H(r_0)$.}\label{Fig:LineFormed}
   \end{minipage}
\end{figure}

\begin{figure}[!htb]\centering
   \begin{minipage}{0.45\textwidth}
     \includegraphics[width=1\linewidth]{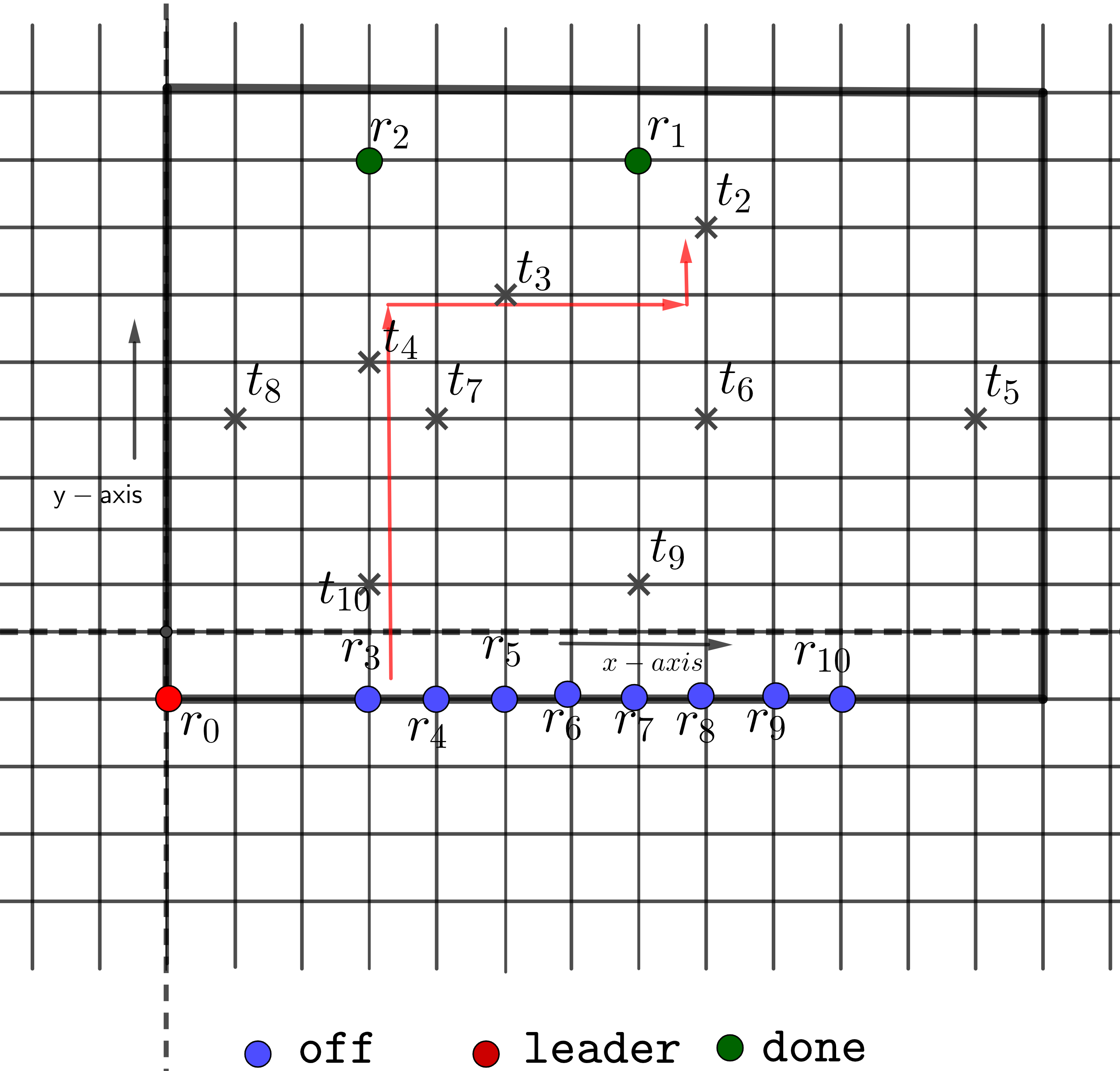}
     \caption{Movement of $r_3$ to $t_2$.}\label{Fig:TargetFormation}
   \end{minipage}
   \hfill
   \begin {minipage}{0.45\textwidth}
    \includegraphics[width=1\linewidth]{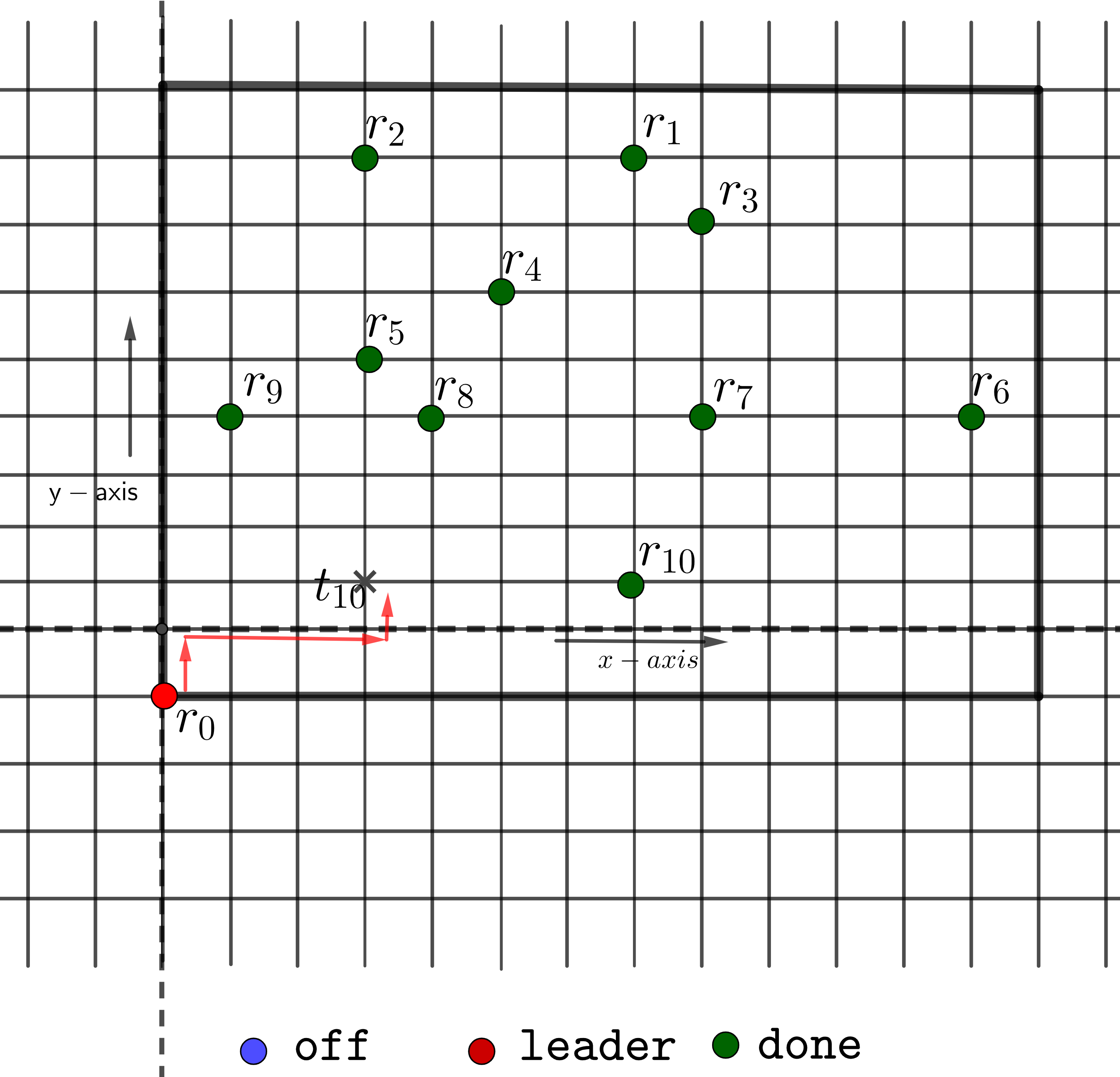}
     \caption{Movement of $r_0$ to $t_{10}$.}\label{Fig:Leader2Target}
   \end{minipage}
\end{figure}

\begin{figure}[ht]
     \includegraphics[width=0.45\linewidth]{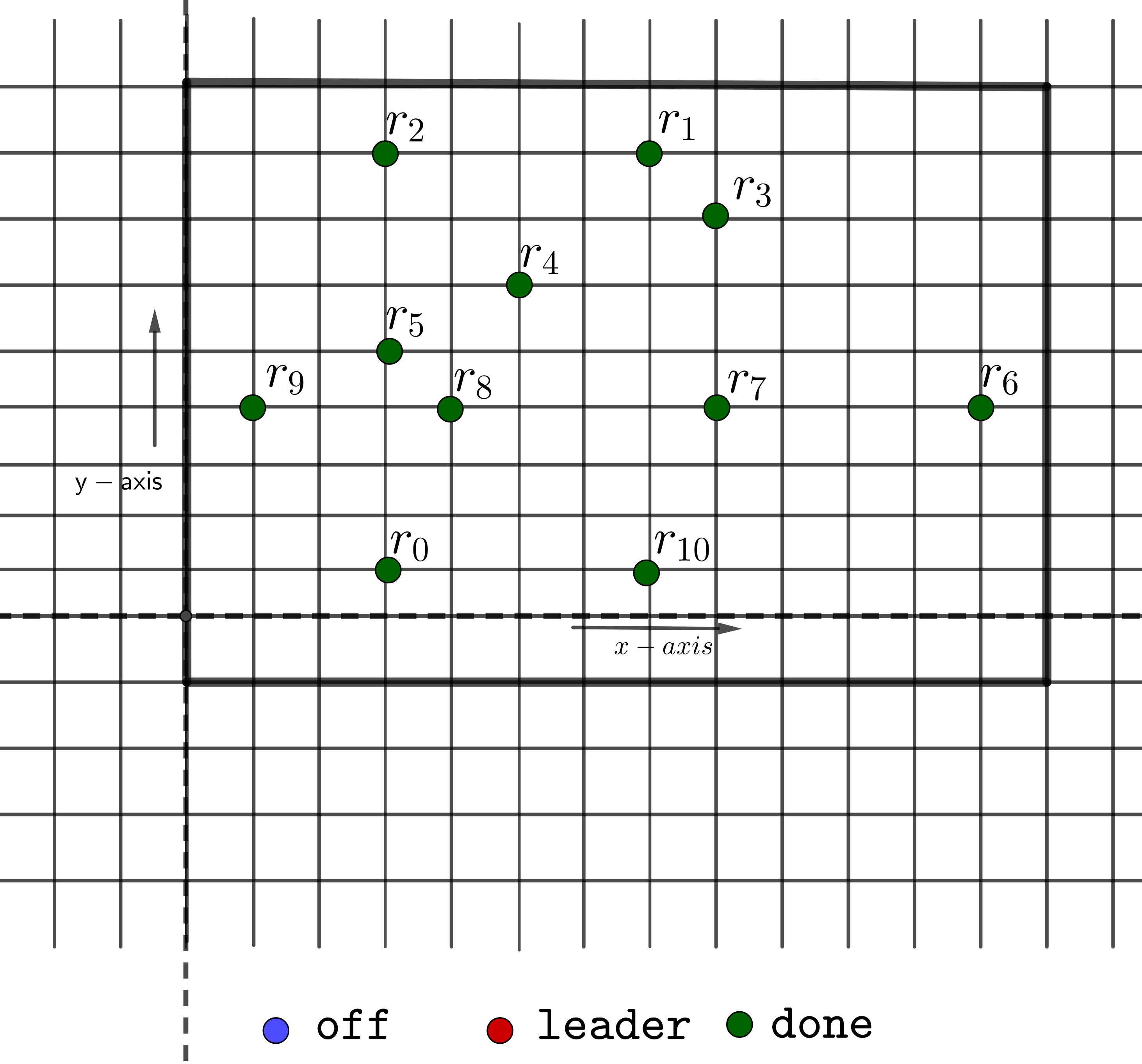}
     \caption{Target formation Achieved.}\label{Fig:Achieved}
\end{figure}

From this information about the number of robots on the line except $r_0$, it can calculate the target position it need to reach and follow the procedure \textsc{TargetMove()} to reach that position. Note that the robot changes its light to \texttt{done} only after reaching its designated target position and in the mean time, no other robot moves from the line while they see robots with light \texttt{off} above them. This technique, of moving the robots sequentially and the ordering of the target co-ordinates, avoid collision in our algorithm. Thus all robots except $r_0$ and $r_{n-1}$ reach their designated target co-ordinates. Now while $r_{n-1}$ moves above from $(n-1,-1)$ to $(n-1,0)$, it sees there is no other robot on the line $\mathcal{L}_H(r_0)$ except $r_0$ and moves to $t_{n-2}$. And finally the robot with light \texttt{leader} moves to the only remaining vacant target position $t_{n-1}$.

If a robot (say, $r$) executes the procedure \textsc{TargetMove($j$)}, that means that the target position of $r$ is $t_j$ with co-ordinate $(t_j(x), t_j(y))$. Observe that, $r$ executes this procedure when it is on $L_{H1}$. Now during this procedure, if $r$ is not on $L_{t_j-1}$, then it moves vertically upwards until it reaches $L_{t_j-1}$. Now, when $r$ is at $L_{t_j-1}$, it checks if the co-ordinate of its current position is $(t_j(x),t_j(y)-1)$. If not, then it moves horizontally to reach the point with the co-ordinate $(t_j(x), t_j(y)-1)$. After that it moves vertically upwards to the point with co-ordinate $(t_j(x), t_j(y))$, which is the target position of $r$ (Figure \ref{Fig:TargetFormation}, \ref{Fig:Leader2Target}, \ref{Fig:Achieved}). Note that during the movement of $r$, there will be no collision as the closed half delimited by $L_{t_j-1}$ and $\mathcal{L}_H(r)$ does not contain any other robot.

Hence from the above discussions, we can conclude the following theorem.

\begin{theorem}
If $\mathbb{C}(T_1)$ is a leader configuration, then $\exists ~T_2 > T_1$ such that $\mathbb{C}(T_2)$ is the target configuration.
\end{theorem}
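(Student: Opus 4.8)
The plan is to follow the execution of Algorithm~\ref{algo_phase3} from a leader configuration and to show it proceeds through two consecutive sub-phases — \emph{compact line formation} and then \emph{target formation} — each of which terminates, is collision-free, and produces exactly the configuration that the next step expects. The backbone is one invariant: at every moment at most one robot with light \texttt{off} is \emph{in transit}, and that mover is uniquely determined by the lights and positions visible to it. First I would pin down the embedding exactly as in the text: by the definition of a leader configuration, the unique \texttt{leader} robot $r_0$ has empty $\mathcal{L}_V(r_0)$, empty $\mathcal{L}_H(r_0)$ except for itself, empty $H_L^O(r_0)$, and one of $H_U^O(r_0), H_B^O(r_0)$ empty; placing $r_0$ at $(0,-1)$ and orienting $Y$ towards the non-empty half, every robot that sees $r_0$ fixes the same coordinates. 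I would also record that $r_0$ makes no move until the "there are no robots with light \texttt{off}" guard in its branch fires, i.e.\ only at the very end, and that visibility of $r_0$ is never an obstacle during line formation because the only robots that can obstruct the segment $\overline{r\,r_0}$ lie strictly below $r$, and those rows are cleared first.

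For the line sub-phase I would induct on the number $i$ of non-leader robots currently on $\mathcal{L}_H(r_0)$, maintaining that they occupy exactly $(1,-1),\dots,(i,-1)$ and that no robot carries light \texttt{done}. While $i<n-1$ the lowest, then leftmost, unplaced robot $r$ is leftmost on its own horizontal line, sees $r_0$ in $H_B^O(r)$, and has $H_B^O(r)\cap H_U^O(r_0)$ empty; since no \texttt{done} light exists it runs \textsc{LineMove}$(i+1)$, which drops it to $L_{H1}$, slides it horizontally to $(i+1,0)$ — collision-free because the sequential invariant makes $r$ the only robot on $L_{H1}$ — and drops it onto the free point $(i+1,-1)$, raising the count to $i+1$. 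Every other robot that might seem eligible either is not leftmost on its row, or lies above a non-empty strip $H_B^O(\cdot)\cap H_U^O(r_0)$, so it stays put; a stale snapshot only makes $r$ recompute the same $i$ and continue consistently. After finitely many moves $i=n-1$, the compact line $\{(0,-1),\dots,(n-1,-1)\}$ is in place and still no \texttt{done} light has appeared.

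For the target sub-phase I would use the stated ordering of the $t_i$ (when $i<j$, $t_i$ lies above $t_j$, or to its right on a common horizontal line) together with the corridor property of \textsc{TargetMove}$(j)$: a robot rises to $L_{t_j-1}$, travels horizontally to $(t_j(x),t_j(y)-1)$, then climbs to $t_j$, and the closed strip between $\mathcal{L}_H(r)$ and $L_{t_j-1}$ stays robot-free throughout, so no collision occurs. From the compact line, exactly one non-leader line robot at a time leaves via the "$r_0\in\mathcal{L}_H(r)$" branch and rises to $L_{H1}$; from $L_{H1}$ it reads the residual line, finds the non-leader robots forming a right-justified block of size $i$, computes target $t_{n-i-2}$, and executes \textsc{TargetMove}, setting light \texttt{done} only on arrival — and until then its \texttt{off} light, lying in every line robot's $H_U^O$, freezes the rest and preserves sequentiality. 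Iterating fills $t_0,\dots,t_{n-3}$; the last non-leader line robot then sees $\mathcal{L}_H(r_0)$ empty but for $r_0$ together with a \texttt{done} robot, so it goes to $t_{n-2}$; finally $r_0$, seeing no \texttt{off} light, runs \textsc{TargetMove}$(n-1)$ to $t_{n-1}$ and sets \texttt{done}, so $\mathbb{C}(T_2)$ is the target configuration.

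The step I expect to be the main obstacle is proving that the sequential-mover invariant really survives the asynchronous, obstructed-visibility regime of the target sub-phase: I must rule out that the adversary lets a line robot rise that is neither the intended next departer nor able to detect this from its local view, leaving a non-contiguous residue that matches no branch of Algorithm~\ref{algo_phase3} and deadlocks the run, and I must show that a robot acting on a stale count (e.g.\ mid-\textsc{LineMove} or mid-\textsc{TargetMove}) re-synchronizes to a consistent action rather than colliding or aiming at the wrong $t_j$. Carrying this out cleanly will require enumerating the partial configurations reachable within each sub-phase and checking that in each one the guards single out a unique legal mover whose prescribed path is unobstructed.
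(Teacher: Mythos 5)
Your proposal takes essentially the same route as the paper's own justification, which is simply the informal discussion preceding the theorem: fix the coordinate system from the \texttt{leader} robot at $(0,-1)$, form the compact line on $\mathcal{L}_H(r_0)$ by moving the lowest-then-leftmost robot via \textsc{LineMove}, then send robots one at a time to the ordered targets via \textsc{TargetMove}, with the leader filling $t_{n-1}$ last. The one issue you flag --- that the guard ``$H_U^O(r)$ has no robots with light \texttt{off}'' does not by itself single out a unique next departer from the line, so the adversary could in principle create a non-contiguous residue matching no branch --- is a gap the paper itself leaves unaddressed (it merely asserts that the robots move sequentially), so your plan is faithful to, and no less rigorous than, the published argument.
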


\section{Conclusion}
Arbitrary pattern formation ($\mathcal{APF})$ has been a very active topic in the field of swarm robotics. It has been thoroughly researched in many different settings. For example, it has been studied when the robots are on a plane or on an infinite grid. Considering obstructed visibility model for robots on a plane, it has been shown that for certain initial configurations,  $\mathcal{APF}$ is solvable with opaque robots having one axis agreement and 6 lights under asynchronous scheduler (\cite{BoseKAS21}). In \cite{abs-1910-02706}, $\mathcal{APF}$ has been solved even with opaque fat robots with light on plane. Comparing to how thoroughly $\mathcal{APF}$ has been studied with robots on plane with obstructed visibility model, it remains quite far behind when it comes to robots on infinite grid with obstructed visibility model. This paper is a stepping stone towards the goal of removing this gap of research regarding $\mathcal{APF}$ between opaque robots on the plane and robots on infinite grid.

In this paper, we have provided a deterministic algorithm for $\mathcal{APF}$ for all solvable initial configurations with one axis agreement and 8 lights under the asynchronous scheduler. For the immediate course of future research, one can think of solving this problem with less numbers of lights. Another interesting way of  extending this problem would be to allow multiplicities in the pattern. 

\paragraph{Acknowledgements.} The second author is supported by UGC, Government of India and the third author is supported by West Bengal State government Fellowship Scheme.

\bibliographystyle{cs-agh}
\bibliography{APF_grid}
\newpage

\end{document}